\if@twocolumn \setboolean{twocolumn}{true} \else \setboolean{twocolumn}{false} \fi \makeatother  
\newcommand{\ifelsetwocolumn}[2]{{\ifthenelse {\boolean{twocolumn}} {#1} {#2} }}
\iftwocolumn \usepackage{latex8, times} \fi 
\definecolor{DarkGray}{rgb}{0.1,0.1,0.5}
\newcommand{\bra}[1]{{\langle#1|}}
\newcommand{\ket}[1]{{|#1\rangle}}
\newcommand{\braket}[2]{{\langle#1|#2\rangle}}
\newcommand{\ketbra}[2]{{\ket{#1}\!\bra{#2}}}
\newcommand{\abs}[1]{{\lvert #1\rvert}}	
\newcommand{\norm}[1]{{\lVert #1 \rVert}}
\newcommand{\bignorm}[1]{{\big\| #1 \big\|}}
\newcommand{\Bignorm}[1]{{\Big\| #1 \Big\|}}
\newcommand{\Tr}{\mathrm{Tr}}
\def\C {{\bf C}}
\def\Re {{\mathrm{Re}}}
\def\D {{\mathcal D}}
\def\H {{\mathcal H}}
\def\L {{\mathcal L}}
\def\N {{\bf N}}
\newcommand{\identity}{\ensuremath{\boldsymbol{1}}} 
\newcommand{\ADVpm} {\mathrm{Adv}^{\pm}}
\newcommand{\B}{\{0,1\}}	
\def\u{\mu}
\def\ut{\nu}
\def\Pim {{\overline{\Pi}}}
\DeclareMathOperator{\gtwoop}{{\operatorname{\gamma_2}}}
\newcommand{\gtwo}[2]{{\gtwoop({#1}\vert{#2})}}
\newcommand{\gtwoDelta}[1]{\gtwo{#1}{\Delta}}
\newcommand{\gtwostar}[2]{{\gamma_2^*({#1}\vert{#2})}}
\newcounter{sprows}
\newlength{\spheight}
\newlength{\spraise}
\newcommand{\comment}[1]{\emph{\color{blue}Comment:\color{black} #1}}
\newlength{\commentslength}
\newcommand{\comments}[1]{
\hspace{-2\parindent}
\addtolength{\commentslength}{-\commentslength}
\addtolength{\commentslength}{\linewidth}
\addtolength{\commentslength}{-\parindent}
\fcolorbox{blue}{white}{\smallskip\begin{minipage}[c]{\commentslength}
\emph{Comments:}\begin{itemize}#1\end{itemize}\end{minipage}}\bigskip
}
\renewcommand{\comment}[1]{}\renewcommand{\comments}[1]{}    
\newcommand{\rem}[1]{}
\newcommand{\ignore}[1]{}
\newtheorem{theorem}{Theorem}[section]
\newtheorem{lemma}[theorem]{Lemma}
\newtheorem{corollary}[theorem]{Corollary}
\newtheorem{claim}[theorem]{Claim}
\newtheorem{proposition}[theorem]{Proposition}
\newtheorem{fact}[theorem]{Fact}
\newtheorem{definition}[theorem]{Definition}
\newfont{\subsubsecfnt}{ptmri8t at 10pt}
\renewcommand{\subparagraph}[1]{\smallskip{\subsubsecfnt #1.}}
\numberwithin{equation}{section} 
\newcommand{\eqnref}[1]{\hyperref[#1]{{(\ref*{#1})}}}
\newcommand{\thmref}[1]{\hyperref[#1]{{Theorem~\ref*{#1}}}}
\newcommand{\lemref}[1]{\hyperref[#1]{{Lemma~\ref*{#1}}}}
\newcommand{\corref}[1]{\hyperref[#1]{{Corollary~\ref*{#1}}}}
\newcommand{\defref}[1]{\hyperref[#1]{{Definition~\ref*{#1}}}}
\newcommand{\secref}[1]{\hyperref[#1]{{Section~\ref*{#1}}}}
\newcommand{\figref}[1]{\hyperref[#1]{{Figure~\ref*{#1}}}}
\newcommand{\tabref}[1]{\hyperref[#1]{{Table~\ref*{#1}}}}
\newcommand{\factref}[1]{\hyperref[#1]{{Fact~\ref*{#1}}}}
\newcommand{\remref}[1]{\hyperref[#1]{{Remark~\ref*{#1}}}}
\newcommand{\appref}[1]{\hyperref[#1]{{Appendix~\ref*{#1}}}}
\newcommand{\claimref}[1]{\hyperref[#1]{{Claim~\ref*{#1}}}}
\newcommand{\propref}[1]{\hyperref[#1]{{Proposition~\ref*{#1}}}}
\newcommand{\exampleref}[1]{\hyperref[#1]{{Example~\ref*{#1}}}}
\newcommand{\conjref}[1]{\hyperref[#1]{{Conjecture~\ref*{#1}}}}
\begin{document}

\title{Quantum query complexity of state conversion}
\author{Troy Lee \and Rajat Mittal \and Ben W.~Reichardt \and Robert {\v S}palek \and Mario Szegedy}

\date{}

\maketitle

\begin{abstract}
State conversion generalizes query complexity to the problem of converting between two input-dependent quantum states by making queries to the input.  We characterize the complexity of this problem by introducing a natural information-theoretic norm that extends the Schur product operator norm.  The complexity of converting between two systems of states is given by the distance between them, as measured by this norm.  

In the special case of function evaluation, the norm is closely related to the general adversary bound, a semi-definite program that lower-bounds the number of input queries needed by a quantum algorithm to evaluate a function.  We thus obtain that the general adversary bound characterizes the quantum query complexity of any function whatsoever.  This generalizes and simplifies the proof of  the same result in the case of boolean input and output.  Also in the case of function evaluation, we show that our norm satisfies a remarkable composition property, implying that the quantum query complexity of the composition of two functions is at most the product of the query complexities of the functions, up to a constant.  Finally, our result implies that discrete and continuous-time query models are equivalent in the bounded-error setting, even for the general state-conversion problem.  
\end{abstract}

\section{Introduction}

A quantum query algorithm for evaluating a function~$f$ attempts to compute $f(x)$ with as few queries to the input~$x$ as possible.  Equivalently, the algorithm begins in a state $\ket 0$, and should approach $\ket{f(x)} \otimes \ket 0$.  The \emph{state-conversion} problem generalizes function evaluation to the case where the algorithm begins in a state $\ket{\rho_x}$ and the goal is to convert this to~$\ket{\sigma_x}$.  State-conversion problems arise naturally in algorithm design, generalizing classical subroutines (\figref{f:potatoes}).  For example, the graph isomorphism problem can be reduced to creating a certain quantum state~\cite{Shi02collision}.  

We characterize the quantum query complexity of state conversion.  We introduce a natural, information-theoretic norm, which extends the Schur product operator norm.  The complexity of state conversion depends only on the Gram matrices of the sets of vectors~$\{\ket{\rho_x}\}$ and~$\{\ket{\sigma_x}\}$, and is characterized as the norm of the {difference} between these Gram matrices.  For example, in function evaluation, the initial Gram matrix is the all-ones matrix, $J$, and the target Gram matrix is $F = \{ \delta_{f(x), f(y)} \}_{x,y}$, so the query complexity depends only on the norm of $F-J$.  Characterizing query complexity in terms of a norm-induced metric has interesting consequences.  For example, it follows that if one can design an optimal algorithm for going from $J$ to $\tfrac{99}{100} J + \tfrac{1}{100} F$, then one also obtains an optimal algorithm for evaluating~$f$.  
 
The norm we introduce is related to the general adversary bound~\cite{HoyerLeeSpalek07negativeadv}, a strengthening of the popular adversary method for showing lower bounds on quantum query complexity~\cite{Ambainis00adversary}.  A recent sequence of works~\cite{fgg:and-or,ccjy:and-or, AmbainisChildsReichardtSpalekZhang07andor, ReichardtSpalek08spanprogram} has culminated in showing that the general adversary bound gives, up to a constant factor, the bounded-error quantum query complexity of any function with boolean output and binary input alphabet~\cite{Reichardt09spanprogram_arxivandfocs, Reichardt10advtight}.  Our more general state-conversion result completes this picture by showing that the general adversary bound characterizes the bounded-error quantum query complexity of any function whatsoever: 

\begin{theorem} \label{t:advtight} 
Let $f : \D \rightarrow E$, where $\D \subseteq D^n$, and $D$ and $E$ are finite sets.  Then the bounded-error quantum query complexity of $f$, $Q(f)$, is characterized by the general adversary bound, $\ADVpm(f)$: 
\begin{equation}
Q(f) = \Theta\big(\ADVpm(f)\big)
 \enspace .
\end{equation}
\end{theorem}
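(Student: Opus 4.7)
The plan is to derive \thmref{t:advtight} directly from the paper's general state-conversion characterization, specialized to the Gram matrices that arise in function evaluation. An algorithm computing $f$ can be taken to begin in the input-independent state $\ket{0}$ and to end (up to the bounded-error tolerance) in $\ket{f(x)}\otimes\ket{0}$, with any junk registers uncomputed. The Gram matrix of the initial states over $x \in \D$ is the all-ones matrix $J$, and the Gram matrix of the clean targets is $F$ with $F_{xy} = \delta_{f(x),f(y)}$. Applying the paper's state-conversion result, which characterizes query complexity by the information-theoretic norm of the difference of Gram matrices, yields $Q(f) = \Theta(\|J - F\|)$.

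The heart of the argument is then to identify this norm with the general adversary bound, i.e., to prove $\|J - F\| = \Theta(\ADVpm(f))$. Both quantities are semidefinite programs, so I would aim to convert feasible solutions of one into feasible solutions of the other with commensurate objective values. The structural similarity is strong: $\ADVpm(f)$ involves matrices $\Gamma_j$ supported on pairs $(x,y)$ with $x_j \neq y_j$, while $J - F$ is nonzero exactly on pairs with $f(x) \neq f(y)$, each of which implies $x_j \neq y_j$ for some $j$. A natural route is to take a feasible adversary solution $\{\Gamma_j\}$ and use it to build a factorization of $J - F$ that witnesses a small value of the new norm, and conversely to decompose a small-norm witness into an adversary-style bound. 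The lower bound $\ADVpm(f) = O(\|J - F\|)$ should, in any case, be consistent with the prior lower bound $Q(f) = \Omega(\ADVpm(f))$ of \cite{HoyerLeeSpalek07negativeadv}.

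The main obstacle I foresee is handling the general output alphabet $E$. In the boolean-output case treated in prior work, $J - F$ coincides up to normalization with the standard adversary matrix, so the identification with $\ADVpm$ is essentially built in. For general $E$, the matrix $F = \sum_{e \in E} \chi_e \chi_e^T$ (with $\chi_e$ the indicator of $f^{-1}(e)$) has a richer block structure, and the SDP for $\|J - F\|$ must be manipulated to respect this. A possible reduction is to relate $f$ to the family of boolean subproblems $f_e(x) = \mathbf{1}[f(x)=e]$ and piece them together; but absorbing the overhead of $|E|$ into a constant, and doing so without paying extra factors in the query complexity, will require care. Once this identification is in place, \thmref{t:advtight} follows immediately by chaining the two equalities $Q(f) = \Theta(\|J - F\|) = \Theta(\ADVpm(f))$.
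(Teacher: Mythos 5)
Your high-level structure is the same as the paper's: run the state-conversion algorithm with initial Gram matrix $J$ and target Gram matrix $F$ to get $Q(f) = O(\gtwoDelta{J-F})$, quote the lower bound $Q(f) = \Omega(\ADVpm(f))$ from~\cite{HoyerLeeSpalek07negativeadv}, and close the loop by comparing $\gtwoDelta{J-F}$ with $\ADVpm(f)$. The problem is that the one step you defer --- showing $\gtwoDelta{J-F} = O(\ADVpm(f))$ with a constant independent of $\abs E$ --- is exactly the nontrivial content for non-boolean output, and the workaround you sketch does not deliver it. Decomposing $f$ into the boolean indicator functions $f_e(x) = \mathbf{1}[f(x)=e]$ and ``piecing them together'' inherently risks an $\abs E$ (or at best $\sqrt{\abs E}$ or $\log\abs E$) overhead, since recovering $f(x)$ from the subproblems requires locating the correct $e$; there is no evident way to absorb that into a constant, so the chain $Q(f) = \Theta(\gtwoDelta{J-F}) = \Theta(\ADVpm(f))$ is not established by your argument.

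The paper's resolution is short but relies on two specific ingredients you do not have. First, by SDP duality, $\ADVpm(f)$ is exactly the same minimization as $\gtwoDelta{J-F}$ but with the constraints only on pairs $x,y$ with $f(x)\neq f(y)$, i.e.\ $\ADVpm(f) = \gtwo{J-F}{\{\Delta_j \circ (J-F)\}}$; this immediately gives the easy direction $\ADVpm(f) \le \gtwoDelta{J-F}$. Second, for the other direction one uses the general inequality $\gtwo{A}{\{Z_j\}} \le \gtwo{A}{\{Z_j\circ B\}}\,\gamma_2(B)$ with $A=B=J-F$, reducing everything to bounding $\gamma_2(J-F)$. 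Since $\gamma_2$ is invariant under duplicating rows and columns, $J-F$ collapses to the $\abs E\times\abs E$ matrix $J-\identity$, and \factref{t:gamma2} exhibits explicit unit vectors showing $\gamma_2(J-\identity)\le 2(1-1/k)$ for every $k$ --- bounded by $2$ uniformly in the alphabet size. That uniform bound is the precise mechanism by which $\abs E$ disappears into a constant (\thmref{cor:gamma2_adversary}), and it is what your proposal is missing. One further minor caution: the state-conversion theorem as an upper bound (\thmref{t:nonnon}) plus the known adversary lower bound already suffice; you do not need the full two-sided ``characterization'' $Q(f)=\Theta(\gtwoDelta{J-F})$ as an intermediate claim, and asserting it outright would itself require the bounded-error lower-bound machinery with an output condition.
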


The general adversary bound is a semi-definite program (SDP).  When phrased as a minimization problem, $\ADVpm(f)$ only has constraints on $x, y$ pairs where $f(x) \ne f(y)$.  In contrast, we consider an SDP that places constraints on {\em all} input pairs~$x, y$.  Fortunately, these extra constraints increase the optimal value by at most a factor of two.  The extra constraints, however, are crucial for the construction of the algorithm, and for any extension to state conversion.  They also lead to a new conceptual understanding of the adversary bound.  

The modified SDP defines a norm, and we define the \emph{query distance} as the metric induced by this new norm.  Our main algorithmic theorem states that there is a quantum algorithm that converts~$\ket{\rho_x}$ to a state with high fidelity to $\ket{\sigma_x}$, and that makes a number of queries of order the query distance between $\rho$ and $\sigma$, the respective Gram matrices of $\{\ket{\rho_x}\}$ and $\{\ket{\sigma_x}\}$.  

The correctness of our algorithm has a direct and particularly simple proof.  Though more general, it simplifies the previous characterization of boolean function evaluation.  At its mathematical heart is a lemma that gives an ``effective" spectral gap for the product of two reflections.  

The query distance also gives lower bounds on the query complexity of state conversion.  It is straightforward to argue that the query distance between $\rho$ and $\sigma$ lower bounds the number of queries needed to reach $\sigma$ {\em exactly}.  To deal with the bounded-error case, we look at the minimum over all $\sigma'$ of the query distance between $\rho$ and $\sigma'$, where $\sigma'$ is a valid Gram matrix for the final states of a successful algorithm.  We show that a simpler necessary and sufficient condition for the latter is that $\sigma$ and $\sigma'$ are close in the distance induced by the Schur product operator norm.   

As the query distance remains a lower bound on the continuous-time query complexity, a corollary of our algorithm is that the continuous-time and discrete query models are related up to a constant factor in the bounded-error setting.  Previously, this equivalence was known up to a sub-logarithmic factor~\cite{CleveGottesmanMoscaSommaYongeMallo08discretize}.  

\begin{figure}
\centering
\includegraphics{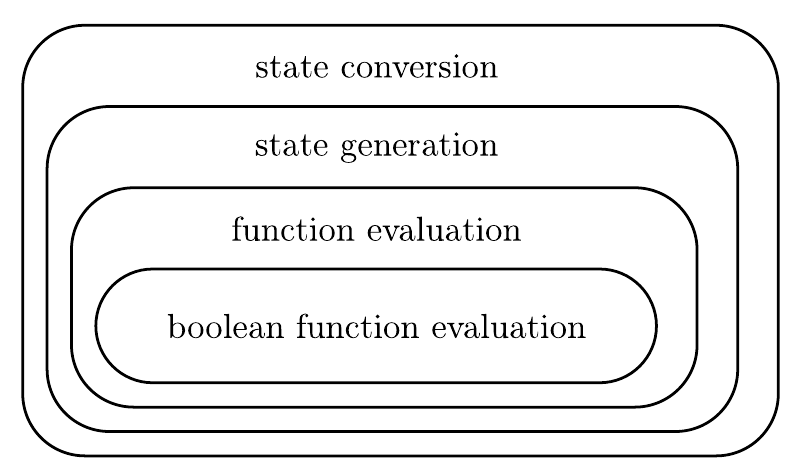}
\caption{The state-conversion problem generalizes the state-generation problem, studied by Ambainis et al.~\cite{AmbainisMagninRoettelerRoland10stategeneration}, which in turn generalizes function evaluation.  The quantum query complexity of evaluating boolean functions has been characterized by~\cite{Reichardt10advtight}.} \label{f:potatoes}
\end{figure}

Since the general adversary bound characterizes quantum query complexity, its properties immediately carry over thereto.  We show that the general adversary bound satisfies a remarkable composition property, that $\ADVpm$ of a composed function $f \circ (g, g, \ldots, g)$ is $O\big(\ADVpm(f) \ADVpm(g)\big)$.  Previously this was known in the boolean case~\cite{Reichardt09spanprogram_arxivandfocs}, and, again, having constraints on all input pairs turns out to be crucial in the extension to non-boolean functions.  When the input of~$f$ is boolean, we can show a matching lower bound, extending~\cite{HoyerLeeSpalek07negativeadv}.

\section{Background} \label{s:definitions}

For a natural number $n$, let $[n] = \{1, 2, \ldots, n\}$.  For two matrices $A$, $B$ of the same size, $A \circ B$ denotes their entrywise product, also known as Schur or Hadamard product.  Let $\langle A, B \rangle = \Tr (A^\dagger B)$.  Denote by $\norm A$ the spectral norm of~$A$.  We will use $\identity$ and $J$ for the identity and all-ones matrices, respectively, where size can be inferred from the context.  Let $\delta_{a,b}$ be the Kronecker delta function.

\subsection{Coherent and non-coherent state-conversion problems}

The \emph{quantum query complexity} of a function $f$, $Q(f)$, is the number of input queries needed to evaluate $f$ with error at most $1/3$~\cite{BuhrmanDeWolf02querysurvey}.  The \emph{state-conversion} problem generalizes function evaluation to the case where the aim is  to transform one quantum state into another, using queries to the input.  The problem is parameterized by sets of states $\{ \ket{\rho_x} \}$ and $\{ \ket{\sigma_x} \}$.  On input $x$, we begin in a state~$\ket{\rho_x}$ and wish to create the state $\ket{\sigma_x}$, using as few queries to $x$ as possible.  State conversion is a slight generalization of the \emph{state-generation} problem, in which in the initial state $\ket{\rho_x}$ is independent of~$x$.  This problem was introduced by Shi~\cite{Shi02collision}, and recently studied systematically by Ambainis et al.~\cite{AmbainisMagninRoettelerRoland10stategeneration}.  

State conversion has two variants, coherent and non-coherent.  Both versions allow the algorithm workspace.  

\begin{definition}[Coherent output condition] \label{def_coherent}
An algorithm solves the coherent state-conversion problem with error $\epsilon$ if for every $x$ it replaces the initial state $\ket{\rho_x} \otimes \ket 0$ by a state $\ket{\sigma'_x}$ such that $\Re(\bra{\sigma'_x}(\ket{\sigma_x}\otimes\ket 0)) \ge \sqrt{1-\epsilon}$.  
\end{definition}

\begin{definition}[Non-coherent output condition] \label{def_nc}
An algorithm solves the non-coherent state-conversion problem with error $\epsilon$ if for every $x$ it replaces the initial state $\ket{\rho_x} \otimes \ket 0$ by a state $\ket{\sigma'_x}$ such that $|\bra{\sigma'_x}(\ket{\sigma_x}\otimes\ket{s_x})| \ge \sqrt{1-\epsilon}$ for some state $\ket{s_x}$ that may depend on $x$.  
\end{definition}

The query complexity of state conversion only depends on the Gram matrices of the initial and target states, i.e., on $\rho = \{ \braket{\rho_x}{\rho_y} \}_{x,y}$ and $\sigma = \{ \braket{\sigma_x}{\sigma_y} \}_{x,y}$.  (In function evaluation and state generation, $\rho$ is the all-ones matrix.)  Let $Q_\epsilon(\rho, \sigma)$ and $Q^{nc}_\epsilon(\rho, \sigma)$ be the minimum number of queries required to solve the coherent and non-coherent state-conversion problems, respectively, with error~$\epsilon$.  

For evaluating functions, coherent and non-coherent complexities are equal up to constant factors.  An example that illustrates the difference between these output conditions is computing a boolean function in the phase, that is where the target state $\ket{\sigma_x} = (-1)^{f(x)} \ket 0$.  In this case the non-coherent complexity is trivial, while the bounded-error coherent complexity is equal to~$Q(f)$, up to constant factors.

\subsection{The \texorpdfstring{$\gamma_2$}{gamma\_2} norm}

We will make use of the $\gamma_2$ norm, also known as the Schur product operator norm~\cite{Bhatia07, LeeShraibmanSpalek08}.  This norm has been introduced recently to complexity theory by Linial et al.~\cite{LinialMendelsonSchechtmanShraibman07}, and has proven to be very useful in quantum information.  It is currently the best lower bound known on quantum communication~\cite{LinialShraibman09cc}, and Tsirelson has shown that its dual norm characterizes the bias of a quantum XOR game~\cite{Tsirelson87xor, Unger08thesis}.  

\begin{definition}
Let $A$ be matrix with rows labeled by $\D_1$ and columns by $\D_2$.  Define 
\begin{equation} \label{e:gamma2def}
\gamma_2(A) = \!\!\! \min_{\Large \substack{m \in \N, \\ \ket{u_{x}}, \ket{v_{y}} \in \C^m}} \!\!\!
\Big\{
\max \big\{\max_{x \in \D_1} \norm{\ket{u_x}}^2, \max_{y \in \D_2} \norm{\ket{v_y}}^2 \big\} : 
\text{$\forall x \in \D_1, y \in \D_2$, $A_{x,y} = \braket{u_x}{v_y}$}
\Big\} \enspace .
\end{equation}
\end{definition}

The following fact plays a key role in the design of our algorithm and in relating our new norm to the general adversary bound: 

\begin{fact} \label{t:gamma2}
For any $k \in \N$, let $\identity$ and $J$ be the $k$-by-$k$ identity and all-ones matrices, respectively.  Then $\gamma_2(J-\identity) \leq 2(1 - 1/k)$.  
\end{fact}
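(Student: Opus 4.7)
The plan is to exhibit an explicit factorization $(J - \identity)_{x,y} = \braket{u_x}{v_y}$ with $\norm{\ket{u_x}}^2 = \norm{\ket{v_y}}^2 = 2(1 - 1/k)$ for every $x, y \in [k]$; by the definition of $\gamma_2$, this immediately yields the claimed bound.

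To build the factorization, I would decompose $\C^k$ into the line spanned by the all-ones vector $\ket{e} := \sum_{x \in [k]} \ket x$ and its orthogonal complement. Set $\ket{\tilde x} := \ket x - \ket e / k$, so that $\ket{\tilde x} \perp \ket e$, $\norm{\ket{\tilde x}}^2 = 1 - 1/k$, and more generally $\braket{\tilde x}{\tilde y} = \delta_{x,y} - 1/k$. Then, writing $\alpha := \sqrt{1 - 1/k}$, I would take
\begin{equation*}
\ket{u_x} = \ket{\tilde x} + \alpha \, \ket e / \sqrt{k}, \qquad \ket{v_y} = -\ket{\tilde y} + \alpha \, \ket e / \sqrt{k}.
\end{equation*}
Since the two summands of each vector are mutually orthogonal, $\norm{\ket{u_x}}^2 = \norm{\ket{v_y}}^2 = (1 - 1/k) + \alpha^2 = 2(1 - 1/k)$. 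A direct computation then gives
\begin{equation*}
\braket{u_x}{v_y} = -\braket{\tilde x}{\tilde y} + \alpha^2 = -(\delta_{x,y} - 1/k) + (1 - 1/k) = 1 - \delta_{x,y},
\end{equation*}
as required.

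There is essentially no difficulty here beyond guessing the right ansatz: the crucial choice is the opposite sign on the orthogonal components of $\ket{u_x}$ and $\ket{v_y}$, which makes the self-overlap $\braket{\tilde x}{\tilde x} = 1 - 1/k$ cancel the contribution $\alpha^2 = 1 - 1/k$ from the parallel part, producing the zero diagonal of $J - \identity$, while for $x \neq y$ the orthogonal components contribute $1/k$ and the parallel component contributes the remaining $(k-1)/k$ to recover the off-diagonal value~$1$.
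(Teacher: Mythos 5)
Your proof is correct and takes essentially the same approach as the paper: both exhibit an explicit factorization of $J-\identity$ by symmetric vectors in $\C^k$ and verify the inner products and norms directly. The only difference is presentational—the paper hand-picks coefficients in the standard basis to get unit vectors realizing a rescaled matrix $\tfrac{k}{2(k-1)}(J-\identity)$, whereas you parameterize along the all-ones direction and its complement so that the norms come out to $2(1-1/k)$ immediately, which is arguably a cleaner way to package the same construction.
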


\begin{proof}
We demonstrate unit vectors $\{\ket{\u_i}\}_{i \in [k]}$, and $\{\ket{\ut_i}\}_{i \in [k]}$ such that 
$\braket{\u_i}{\ut_j} = \tfrac12 \tfrac{k}{k-1}(1-\delta_{i,j})$: let $\ket{\u_i} = -\alpha \ket i + {\sqrt{1-\alpha^2} \over \sqrt{k-1}} \sum_{j \neq i} \ket j$, $\ket{\ut_i} = \sqrt{1-\alpha^2} \ket i + {\alpha \over \sqrt{k-1}} \sum_{j \neq i} \ket j$, for $\alpha = \sqrt{\frac12 - \frac{\sqrt{k-1}}{k}}$.  
\end{proof}

\section{Filtered \texorpdfstring{$\gamma_2$}{gamma\_2} norm and query distance}

We define a natural generalization of the $\gamma_2$ norm, in which the factorization is filtered through certain matrices: 

\begin{definition}[Filtered $\gamma_2$ norm]
Let $A$ be a matrix with rows indexed by elements of $\D_1$ and columns by $\D_2$, and let $Z = \{Z_1, \ldots, Z_n\}$ be a set of $|\D_1|$-by-$|\D_2|$ matrices.  Define $\gtwo{A}{Z}$ by 
\begin{equation} \label{e:filteredgamma2def}
\begin{split}
\gtwo{A}{Z} &= \min_{\Large \substack{m \in \N, \\ \ket{u_{x j}}, \ket{v_{y j}} \in \C^m}}  \max \Big\{
\max_{x \in \D_1} \sum_j \norm{\ket{u_{xj}}}^2, 
\max_{y \in \D_2} \sum_j \norm{\ket{v_{yj}}}^2\Big\} \\
&\qquad 
\text{$\forall x \in \D_1, y \in \D_2$, $A_{x,y} = \sum_j (Z_j)_{x,y} \braket{u_{xj}}{v_{yj}}$}
 \enspace .
\end{split}
\end{equation}
\end{definition}

The filtered $\gamma_2$ norm $\gtwo{\,\cdot\,}{Z}$ is a norm.  Among its many properties (see \appref{s:gamma2properties}) are that $\gamma_2(A) = \gtwo{A}{\{J\}}$, where $J$ is the all-ones matrix, and $\gtwo{A}{\{A\}} = 1$ if $A \neq 0$.  We use below the general inequality 
\begin{equation}\label{e:gtwocirc} 
\gtwo{A}{\{Z_j\}} \leq \gtwo{A}{\{Z_j \circ B\}} \gamma_2(B)
 \enspace .
\end{equation}  

The query distance is the metric induced when the filter matrices are related to the query~process.  Let $\D \subseteq D^n$ be a finite set, and let $\Delta = \{\Delta_1, \ldots, \Delta_n\}$, where $\Delta_j = \{1 - \delta_{x_j, y_j}\}_{x,y \in \D}$.  Thus $\Delta_j$ encodes when a query to index~$j$ distinguishes input $x$ from input~$y$.  

\begin{definition}
The \emph{query distance} between $\rho$ and $\sigma$, two $\abs\D$-by-$\abs\D$ matrices, is $\gtwoDelta{\rho - \sigma}$.  
\end{definition}

\thmref{t:nonnonnon} below shows that the query distance characterizes the quantum query complexity of state conversion.  Furthermore, as we show now, it is closely related to the general adversary bound, a lower bound on the quantum query complexity for function evaluation introduced by~\cite{HoyerLeeSpalek07negativeadv}.  Let $f: \D \rightarrow E$ and let $F = \{\delta_{f(x), f(y)}\}_{x,y}$.  

\begin{definition}
The general adversary bound for $f$ is given by 
\begin{gather}
\ADVpm(f) = 
\max \Big\{ \norm{\Gamma} : \forall j \in [n], \norm{\Gamma \circ \Delta_j} \leq 1 \Big\} \label{e:ADVpmmax} 
 \enspace ,
\end{gather}
where the maximization is over $\abs\D$-by-$\abs\D$ real, symmetric matrices $\Gamma$ satisfying $\Gamma \circ F = 0$.  
\end{definition}

By taking the dual of this SDP, we obtain a bound that is the same as $\gtwoDelta{J-F}$, except without any constraints on pairs $x, y$ with $f(x) = f(y)$.  In other words, $\ADVpm(f) = \gtwo{J-F}{\{\Delta_j \circ (J-F)\}}$.  

\begin{theorem} \label{cor:gamma2_adversary}
The values of the general adversary bound and $\gtwoDelta{J-F}$ differ by at most a factor of two, and are equal in the case that the function has boolean output: 
\begin{equation*}
\ADVpm(f) \le \gtwoDelta{J-F} \le 2 \big(1-{1}/{\abs E}\big) \ADVpm(f)
 \enspace .
\end{equation*}
\end{theorem}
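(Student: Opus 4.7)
The plan is to use the identity $\ADVpm(f) = \gtwo{J-F}{\{\Delta_j \circ (J-F)\}}$ already noted in the excerpt (via SDP duality) and then to compare the filter family $\{\Delta_j \circ (J-F)\}$ with $\Delta = \{\Delta_j\}$.

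For the first inequality, I would observe that any feasible factorization for $\gtwoDelta{J-F}$ is automatically feasible for the ``smaller'' filter $\{\Delta_j \circ (J-F)\}$. Given vectors $\{\ket{u_{xj}}\}$, $\{\ket{v_{yj}}\}$ witnessing $\gtwoDelta{J-F}$, the defining constraint is unchanged on entries $(x,y)$ with $f(x)\neq f(y)$ (where $(J-F)_{x,y}=1$, so $\Delta_j\circ(J-F)$ agrees entrywise with $\Delta_j$), and becomes trivially $0=0$ when $f(x)=f(y)$. The vector norms are unaffected, so the minimum can only drop, giving $\ADVpm(f)\le \gtwoDelta{J-F}$.

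For the second inequality, I would invoke \eqnref{e:gtwocirc} with $A=J-F$, $Z_j = \Delta_j$, and $B=J-F$ to obtain
\begin{equation*}
\gtwoDelta{J-F} \le \gtwo{J-F}{\{\Delta_j\circ(J-F)\}}\,\gamma_2(J-F) = \ADVpm(f)\,\gamma_2(J-F) \enspace .
\end{equation*}
It then remains to show $\gamma_2(J-F)\le 2(1-1/\abs{E})$. My plan is to factor $J-F = M(J_E-\identity_E)M^T$, where $M$ is the $\abs{\D}\times\abs{E}$ label matrix $M_{x,e}=\delta_{f(x),e}$. Any factorization $J_E-\identity_E = U^T V$ given by \factref{t:gamma2} then extends to $J-F = (UM^T)^T(VM^T)$, and since the $x$th column of $UM^T$ (resp.\ $VM^T$) equals the $f(x)$th column of $U$ (resp.\ $V$), column norms are preserved. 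For boolean $E$, \factref{t:gamma2} yields $\gamma_2(J-F)\le 1$, giving the equality claim.

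The only non-mechanical step is spotting the clean factorization $J-F=M(J_E-\identity_E)M^T$, which reduces the $\abs{\D}$-sized problem to the $\abs{E}$-sized one handled by \factref{t:gamma2}; beyond that there is no real obstacle.
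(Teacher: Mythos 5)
Your proposal is correct and follows essentially the same route as the paper: the first inequality is the observation that the minimization form of $\ADVpm(f)$ (i.e., $\gtwo{J-F}{\{\Delta_j \circ (J-F)\}}$) has only trivial constraints on pairs with $f(x)=f(y)$, and the second uses Eq.~\eqnref{e:gtwocirc} with $B=J-F$ together with $\gamma_2(J-F)\le 2(1-1/\abs{E})$ from \factref{t:gamma2}. Your explicit factorization $J-F=M(J_E-\identity_E)M^T$ is just a concrete rendering of the paper's remark that $\gamma_2$ is invariant under duplicating rows and columns.
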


\begin{proof}
Since $\ADVpm(f)$ has fewer constraints as a minimization problem, $\ADVpm(f) \leq \gtwoDelta{J - F}$.  

For the other direction, use Eq.~\eqnref{e:gtwocirc} with $Z_j = \Delta_j$, $A = B = J-F = A \circ B$.  As it is readily seen that $\gamma_2$ is invariant under 
adding or removing duplicate rows or columns, \factref{t:gamma2} implies $\gamma_2(J-F) \leq 2(1 - 1/\abs E)$.  
\end{proof}

\section{Characterization of quantum query complexity}

In this section, we show that the query complexities for function evaluation, and coherent and non-coherent state conversion are characterized in terms of~$\gamma_2$ and~$\gtwoDelta{\cdot}$.  We begin with the upper bounds.

\subsection{Quantum query algorithm for state conversion} \label{s:nonbooleanextension}

\begin{theorem} \label{t:nonnon}
Consider the problem of converting states $\{\ket{\rho_x}\}$ to $\{\ket{\sigma_x}\}$, for $x \in \D \subseteq D^n$.  Let $\rho$ and $\sigma$ be the states' Gram matrices.  For any $\epsilon \in (0, \gtwoDelta{\rho - \sigma})$, this problem has query complexity 
\begin{equation*}
Q_\epsilon(\rho, \sigma) = O\Big( \gtwoDelta{\rho - \sigma} \frac{\log(1/\epsilon)}{\epsilon^2} \Big)
 \enspace .
\end{equation*}
\end{theorem}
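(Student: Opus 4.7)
The plan is to construct an explicit algorithm from a near-optimal feasible solution to the SDP defining $\gtwoDelta{\rho-\sigma}$, following the two-reflection (span program) paradigm of earlier work. Set $W = \gtwoDelta{\rho-\sigma}$ and extract vectors $\{\ket{u_{xj}}\}, \{\ket{v_{yj}}\}$ with $\rho_{xy} - \sigma_{xy} = \sum_{j : x_j \ne y_j} \braket{u_{xj}}{v_{yj}}$ and with $\sum_j \norm{\ket{u_{xj}}}^2 \le W$ for every $x$ and $\sum_j \norm{\ket{v_{yj}}}^2 \le W$ for every $y$. I also fix purifications $\ket{\tilde\rho_x}, \ket{\tilde\sigma_x}$ of the two Gram matrices in a common ancilla space, so that the algorithm has concrete states to manipulate.

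The core construction lives in a Hilbert space with a flag qubit, a query-index register for $j \in [n]$, a value register for $d \in D$, and an ancilla carrying the $\ket u, \ket v$ vectors. Two subspaces are required. The first, $\Xi$, is $x$-independent and encodes the matching between the $u$ and $v$ vectors witnessing the decomposition of $\rho - \sigma$; its reflection $R_\Xi$ uses no queries. The second, $\Psi_x$, depends on $x$ only through absorbing basis vectors of the form $\ket j \ket{x_j}$, so the reflection $R_{\Psi_x}$ costs $O(1)$ queries per application. I arrange the subspaces so that $\ket 0 \ket{\tilde\rho_x} - \ket 1 \ket{\tilde\sigma_x}$, plus a correction of norm $O(\sqrt W)$ assembled from the $\ket{u_{xj}}, \ket{v_{xj}}$, lies in the $+1$ eigenspace of $R_\Xi R_{\Psi_x}$; the SDP identity for $\rho - \sigma$ is exactly what verifies this.

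The algorithm then prepares $\ket 0 \ket{\tilde\rho_x}$, applies phase estimation of $R_\Xi R_{\Psi_x}$ to precision $\gamma = \Theta(\epsilon/W)$, and post-selects on phase $0$. The ``effective spectral gap'' lemma for products of two reflections, which is the technical heart of the analysis, guarantees that on the phase-$0$ branch the resulting state is within squared fidelity $O(\epsilon)$ of the exact eigenvector above. Flipping the flag qubit and uncomputing the ancilla then produces a state meeting \defref{def_coherent}. Each phase-estimation run uses $O(1/\gamma) = O(W/\epsilon)$ queries, and $O(\log(1/\epsilon)/\epsilon)$ amplitude-amplification rounds on the flag-$1$ outcome boost the success probability to $1 - \epsilon$, yielding the claimed $O(W \log(1/\epsilon)/\epsilon^2)$ bound.

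The main obstacle is the subspace design in the second step: arranging $\Xi$ and $\Psi_x$ so that the SDP decomposition of $\rho - \sigma$ translates precisely into the claimed $+1$-eigenvector property. Having both $\ket{\tilde\rho_x}$ and $\ket{\tilde\sigma_x}$ coexist in a single embedding, which the flag qubit accomplishes, forces a somewhat involved distribution of the $\ket u, \ket v$ vectors across the two flag sectors, more intricate than in the boolean function-evaluation case treated previously. Once this construction is in place, the effective spectral gap lemma is essentially off-the-shelf, and the error and query counts then follow routinely.
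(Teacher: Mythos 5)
Your high-level skeleton matches the paper's: extract a feasible solution to the $\gtwoDelta{\rho-\sigma}$ SDP, build one query-free reflection about an $x$-independent span and one cheap reflection depending on $x$ only through $\ket{j}\ket{x_j}$-type components, run phase estimation/detection on their product, and invoke the effective spectral gap lemma. But the part you defer as ``the main obstacle'' is exactly where the proof lives, and the sketch you give of it would not work. In the paper the eigenvalue-one eigenvector associated with input $x$ is (up to normalization) the \emph{plus} combination $\ket{t_{x+}}=\tfrac1{\sqrt2}(\ket0\ket{\rho_x}+\ket1\ket{\sigma_x})$ plus a correction of norm only $O(\epsilon)$; this is arranged by putting the weight $\epsilon/\sqrt W$ on the $\ket{t_{y-}}$ component of the spanning vectors $\ket{\psi_y}$ that define the query-free reflection. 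The minus combination $\ket{t_{x-}}$ is then shown, via the effective spectral gap lemma applied to $\tfrac{\sqrt W}{\epsilon}\ket{\psi_x}$ (a vector of norm $O(W/\epsilon)$), to have squared overlap $O(\Theta^2 W^2/\epsilon^2)$ with the phase-$\Theta$ subspace, so with precision $\Theta=\epsilon^2/W$ (not your $\Theta(\epsilon/W)$) phase detection coherently maps $\tfrac1{\sqrt2}(\ket{t_{x+}}+\ket{t_{x-}})\mapsto\tfrac1{\sqrt2}(\ket{t_{x+}}-\ket{t_{x-}})=\ket1\ket{\sigma_x}$ up to error $O(\epsilon)$, with no measurement at all. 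Your version instead places the \emph{minus} combination plus a correction of norm $O(\sqrt W)$ in the $+1$ eigenspace. Then the normalized eigenvector is dominated by the correction: the phase-$0$ branch of the initial state has weight only $O(1/W)$, and the post-selected state is mostly $x$-dependent junk built from $\ket{u_{xj}},\ket{v_{xj}}$, which cannot be ``uncomputed'' without further queries. The balancing act between the $O(\epsilon)$-size correction in the eigenvector and the $O(W/\epsilon)$-size spectral-gap witness is the missing quantitative idea, and without it the stated fidelity and query count do not follow.

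Two further problems with the wrapper. First, amplitude amplification is not an available resource here: in state conversion you are handed $\ket{\rho_x}$ once and cannot reflect about (or re-prepare) the initial state $\ket0\ket{\rho_x}$, so a measure-and-amplify strategy is structurally unsound in this model; the paper's algorithm is a single coherent run of phase detection precisely for this reason, and your $O(\log(1/\epsilon)/\epsilon)$ round count is asserted rather than derived from any success-probability analysis. Second, for non-binary input alphabets, reflecting about the basis vectors $\ket j\ket{x_j}$ produces the filter $\delta_{x_j,y_j}$ rather than $\Delta_j=1-\delta_{x_j,y_j}$; the paper handles this with the unit vectors $\ket{\mu_i},\ket{\nu_i}$ of \factref{t:gamma2}, satisfying $\braket{\mu_i}{\nu_j}\propto 1-\delta_{i,j}$, reflecting about $\ket j\ket{\mu_{x_j}}$ and placing the $\nu$ vectors in the eigenvector. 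Your proposal needs an analogous device, which it currently lacks.
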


\thmref{t:advtight} follows from Theorems~\ref{cor:gamma2_adversary} and~\ref{t:nonnon}, together with the lower bound from~\cite{HoyerLeeSpalek07negativeadv}.  

\smallskip

The mathematical heart of our analysis is to study the spectrum of the product of two reflections.  The following lemma gives an ``effective" spectral gap for two reflections applied to a vector.  It is closely related to~\cite[Theorem~8.7]{Reichardt09spanprogram_arxivandfocs}, but has a significantly simpler statement and proof.  

\begin{lemma}[Effective spectral gap lemma] \label{t:ref_lemma}
Let $\Pi$ and $\Lambda$ be projections, and let $R = (2\Pi - \identity) (2\Lambda - \identity)$ be the product of the reflections about their ranges.  Let $\{\ket \beta\}$ be a complete orthonormal set of eigenvectors of $R$, with respective eigenvalues $e^{i \theta(\beta)}$, $\theta(\beta) \in (-\pi, \pi]$.  

For any $\Theta \geq 0$, let $P_\Theta = \sum_{\beta : \abs{\theta(\beta)} \le \Theta} \ketbra{\beta}{\beta}$.  If $\Lambda \ket w = 0$, then 
\[
\norm{P_\Theta \Pi \ket w} \le \tfrac{\Theta}{2} \norm{\ket w}
 \enspace .
\]
\end{lemma}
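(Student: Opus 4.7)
My plan is to reduce everything to the identity $\Pi\ket{w} = \tfrac{1}{2}(\identity - R)\ket{w}$, after which the result becomes essentially Parseval together with $|\sin x| \leq |x|$.

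The first step is to observe that the hypothesis $\Lambda\ket w = 0$ is equivalent to $(2\Lambda - \identity)\ket w = -\ket w$, i.e., $\ket w$ is a $(-1)$-eigenvector of the reflection about $\Range(\Lambda)$. Applying $R$ to $\ket w$ therefore gives
\begin{equation*}
R\ket w = (2\Pi-\identity)(2\Lambda-\identity)\ket w = -(2\Pi-\identity)\ket w = \ket w - 2\Pi\ket w,
\end{equation*}
so that
\begin{equation*}
\Pi\ket w = \tfrac{1}{2}(\identity - R)\ket w.
\end{equation*}

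The second step is to expand in the eigenbasis of $R$. Since $R$ is unitary with $R\ket\beta = e^{i\theta(\beta)}\ket\beta$, we have $\bra\beta R = e^{i\theta(\beta)}\bra\beta$, hence
\begin{equation*}
\braket{\beta}{\Pi w} = \tfrac{1}{2}\bigl(1 - e^{i\theta(\beta)}\bigr)\braket{\beta}{w},
\end{equation*}
so $\lvert\braket{\beta}{\Pi w}\rvert = \lvert\sin(\theta(\beta)/2)\rvert\, \lvert\braket{\beta}{w}\rvert$.

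The final step assembles the bound. Using that $\{\ket\beta\}$ is a complete orthonormal basis,
\begin{equation*}
\norm{P_\Theta \Pi\ket w}^2 = \sum_{\beta:\, |\theta(\beta)|\le \Theta}\! \sin^2(\theta(\beta)/2)\,|\braket{\beta}{w}|^2 \le \sin^2(\Theta/2)\sum_\beta |\braket{\beta}{w}|^2 = \sin^2(\Theta/2)\,\norm{\ket w}^2,
\end{equation*}
and then $|\sin(\Theta/2)| \le \Theta/2$ gives the claim. I don't anticipate any real obstacles: the only nontrivial observation is the first-step identity, and once one recognizes that the $(-1)$-eigenvalue property of $\ket w$ under $2\Lambda-\identity$ forces $R-\identity$ to compute $-2\Pi$ on $\ket w$, the rest is pure bookkeeping.
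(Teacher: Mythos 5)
Your proof is correct, and it takes a genuinely different route from the paper's. The paper sets $\ket v = P_\Theta\Pi\ket w$, $\ket{v'}=(2\Lambda-\identity)\ket v$, $\ket{v''}=R\ket v$, bounds $\norm{\ket v-\ket{v''}}\le\Theta\norm{\ket v}$ because $\ket v$ lies in the small-phase eigenspace, and then uses that $\ket v+\ket{v'}$ is fixed by $\Lambda$ (hence orthogonal to $\ket w$) together with a Cauchy--Schwarz step to reach $\norm{\ket v}^2\le\tfrac12\norm{\ket v-\ket{v''}}\,\norm{\ket w}$. You instead push the hypothesis through $\ket w$ itself: $(2\Lambda-\identity)\ket w=-\ket w$ gives the identity $2\Pi\ket w=(\identity-R)\ket w$, after which Parseval in the eigenbasis of $R$ and $\tfrac12\abs{1-e^{i\theta}}=\abs{\sin(\theta/2)}\le\abs{\theta}/2$ finish the argument. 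Your version is shorter, needs no auxiliary vectors, no orthogonality bookkeeping and no Cauchy--Schwarz, and in fact yields the marginally sharper bound $\norm{P_\Theta\Pi\ket w}\le\sin(\Theta/2)\,\norm{P_\Theta\ket w}$ for $\Theta\le\pi$; the paper's formulation is essentially the one inherited from the earlier span-program analysis it cites, and does not appear to buy anything extra for this particular statement. One tiny polish: your step $\sin^2(\theta(\beta)/2)\le\sin^2(\Theta/2)$ uses monotonicity of $\sin$ and so implicitly assumes $\Theta\le\pi$ (the paper's step $2(1-\cos\theta(\beta))\le2(1-\cos\Theta)$ carries the same implicit assumption); it is cleaner to bound each term directly by $\abs{\sin(\theta(\beta)/2)}\le\abs{\theta(\beta)}/2\le\Theta/2$, which covers all $\Theta\ge0$ at once (and for $\Theta\ge2$ the claim is trivial anyway, since $\norm{P_\Theta\Pi\ket w}\le\norm{\ket w}\le\tfrac{\Theta}{2}\norm{\ket w}$).
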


\begin{proof}
The claim can be shown via Jordan's Lemma~\cite{Jordan75projections}; we give a direct proof.  
\def\Pim{{\overline\Pi}}
Let $\ket v = P_\Theta \Pi \ket w$, $\ket{v'} = (2\Lambda - \identity) \ket v$ and $\ket{v''} = (2\Pi - \identity) \ket{v'} = R \ket v$.  When $\Theta$ is small, $\ket v$ and $\ket{v''}$ are close: 
\begin{equation*}
\norm{\ket v - \ket{v''}}{}^2 = \Bignorm{\sum_{\beta : \abs{\theta(\beta)} \le \Theta} (1-e^{i \theta(\beta)}) \braket \beta v \ket \beta}^2 
\le 2(1 - \cos \Theta) \norm{\ket v}^2 \le \Theta^2 \norm{\ket v}^2 
 \enspace .
\end{equation*}
Notice that $\ket v + \ket{v'}$ is fixed by~$\Lambda$.  Similarly, $\Pi$ fixes $\ket{v'} + \ket{v''}$ and~$\Pim = \identity - \Pi$ fixes $\ket{v'} - \ket{v''}$.  Hence $0 = \braket{v+v'}{w} = \bra{v+v'}\Pi \ket{w} + \bra{v+v'}\Pim \ket{w} = \bra{v+v''}\Pi \ket{w} + \bra{v-v''}\Pim \ket{w}$.  
Therefore, $\norm{\ket v}^2 = \abs{\bra v \Pi \ket w} = \tfrac12 \abs{\bra{v - v''} \Pi \ket w + \bra{v + v''} \Pi \ket w} = \frac{1}{2} \abs{\bra{v - v''}(\Pi - \Pim) \ket w}$.   We conclude 
\[
\norm{\ket v}^2 
\le \tfrac12 \norm{\ket v - \ket{v''}} \norm{(\Pi-\Pim) \ket w} 
\le \tfrac{\Theta}{2} \norm{\ket v} \norm{\ket w}
 \enspace . \qedhere
\]
\end{proof}

We will also use a routine that, roughly, reflects about the eigenvalue-one eigenspace of a~unitary: 

\begin{theorem}[Phase detection~\cite{Kitaev95stabilizer, MagniezNayakRolandSantha07search}] \label{t:phasedetection}
For any $\Theta, \delta > 0$, there exists~$b = O(\log\frac1\delta \log \frac1\Theta)$ and, for any unitary $U \in \L(\H)$, a quantum circuit $R(U)$ on $\H \otimes (\C^2)^{\otimes b}$ that makes at most $O\big(\frac{\log(1/\delta)}{\Theta}\big)$ controlled calls to $U$ and $U^{-1}$, and such that for any eigenvector $\ket \beta$ of $U$, with eigenvalue $e^{i \theta}$, $\theta \in (-\pi, \pi]$, 
\begin{itemize}
\item
If $\theta = 0$, then $R(U) \ket \beta \otimes \ket{0^b} = \ket \beta \otimes \ket{0^b}$.  
\item
If $\abs \theta > \Theta$, then $R(U) \ket \beta \otimes \ket{0^b} = - \ket \beta \otimes (\ket{0^b} + \ket{\delta_\beta})$ for some vector $\ket{\delta_\beta}$ with $\norm{\ket{\delta_\beta}} < \delta$.  Thus, if $\ket \gamma \in \H$ is orthogonal to all eigenvectors of $U$ with eigenvalues $e^{i \theta}$ for $\abs \theta \leq \Theta$, then $\bignorm{ (R(U) + \identity) \ket \gamma \otimes \ket{0^b} } < \delta$.  
\end{itemize}
$R(U)$ is constructed uniformly in the parameters $\Theta$ and $\delta$, and its structure does not depend on~$U$.  
\end{theorem}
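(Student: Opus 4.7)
The plan is to construct $R(U) = V^{-1}(\identity \otimes F) V$, where $V$ is a phase-estimation subroutine that writes an estimate of the phase of $U$ onto the $b$-qubit ancilla, and $F = 2\ketbra{0^b}{0^b} - \identity$ is a reflection about the all-zeros ancilla state. Given any eigenvector $\ket\beta$ of $U$ with eigenphase $\theta$, since $V$ is block-diagonal in the $\ket\beta$ basis we may write $V\ket\beta\ket{0^b} = \ket\beta \otimes \ket{\varphi_\beta}$ and set $\alpha_\beta = \braket{0^b}{\varphi_\beta}$. A one-line computation gives $R(U)\ket\beta\ket{0^b} = -\ket\beta\ket{0^b} + 2\alpha_\beta V^{-1}\ket\beta\ket{0^b}$, and packaging the correction as a single vector yields $R(U)\ket\beta\ket{0^b} = -\ket\beta\otimes(\ket{0^b}+\ket{\delta_\beta})$ with $\norm{\ket{\delta_\beta}} = 2\abs{\alpha_\beta}$. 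So the two properties I need from $V$ are: $\alpha_\beta = 1$ exactly whenever $\theta = 0$, and $\abs{\alpha_\beta} < \delta/2$ whenever $\abs\theta > \Theta$.

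For the core of $V$, I would use textbook phase estimation with $b_0 = \lceil \log_2(c/\Theta) \rceil$ ancilla qubits, for a suitable constant $c$: apply $H^{\otimes b_0}$ to the ancilla, then controlled-$U^{2^j}$ for $j = 0, \ldots, b_0 - 1$ (using $O(1/\Theta)$ controlled calls to $U$ and $U^{-1}$ in total), and then the inverse QFT. The crucial exactness at $\theta = 0$ follows because controlled-$U^{2^j}$ acts as the identity on $\ket\beta$, so the ancilla simply traverses $\ket{0^{b_0}} \to \tfrac{1}{\sqrt{2^{b_0}}}\sum_y \ket y \to \ket{0^{b_0}}$, returning exactly to its starting state. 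The standard QPE tail bound (cf.\ Nielsen--Chuang) then guarantees that for $\abs\theta > \Theta$ the measurement outcome is $0$ with probability at most some constant $q < 1$, which can be driven below $1/4$ by tuning $c$.

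To reduce the failure probability to $\delta^2/4$ while preserving exactness at $\theta = 0$, I would run $k = O(\log(1/\delta))$ independent copies of this basic QPE in parallel, each with its own fresh $b_0$-qubit register. Since every copy acts on the same $\ket\beta$ with identical circuitry, the amplitude on the joint all-zeros ancilla is the $k$-th power of the single-copy amplitude, hence equal to $1$ when $\theta = 0$ and of magnitude at most $q^{k/2} < \delta/2$ when $\abs\theta > \Theta$. The reflection $F$ is then taken about the joint all-zeros state across all $k$ registers. This yields total ancilla size $b = k b_0 = O(\log(1/\Theta)\log(1/\delta))$ and total query cost $O(\log(1/\delta)/\Theta)$, as required.

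The only delicate step is verifying that QPE returns the ancilla to $\ket{0^{b_0}}$ \emph{exactly} when $\theta = 0$, which is what makes $R(U)$ act as the strict identity on eigenvalue-one eigenvectors rather than only approximately so; everything else reduces to routine QPE accuracy, an independent-trials amplification argument, and the observation that the circuit template depends only on $b$, $\Theta$, and $\delta$, with $U$ and $U^{-1}$ appearing only as black-box controlled subroutines in fixed circuit positions, so uniformity in the parameters and independence of the structure from $U$ are immediate.
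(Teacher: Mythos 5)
Your proposal is correct and follows essentially the same route the paper relies on: the paper gives no proof of \thmref{t:phasedetection}, citing Kitaev and Magniez--Nayak--Roland--Santha and noting the routine ``can be constructed using, for example, standard phase estimation,'' which is exactly your $R(U)=V^{-1}(\identity\otimes F)V$ construction with phase estimation on $b_0=O(\log\frac1\Theta)$ qubits and $O(\log\frac1\delta)$ parallel repetitions. Your key observations---exactness at $\theta=0$ because the ancilla returns to $\ket{0^{b_0}}$ exactly, and the off-resonance all-zeros amplitude bounded by $q^{k/2}<\delta/2$ so that $\norm{\ket{\delta_\beta}}=2\abs{\alpha_\beta}<\delta$---are precisely what the cited constructions deliver, with the right query count $O(\log(1/\delta)/\Theta)$ and ancilla size $b=O(\log\frac1\delta\log\frac1\Theta)$.
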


The phase-detection procedure can be constructed using, for example, standard phase estimation~\cite{Kitaev95stabilizer, CleveEkertMacchiavelloMosca98algorithms, NagajWocjanZhang09qma}.  Phase detection is a common subroutine in quantum algorithms, used implicitly or explicitly in, e.g., \cite{Szegedy04walkfocs, AmbainisChildsReichardtSpalekZhang07andor, MagniezNayakRolandSantha07search, ReichardtSpalek08spanprogram, MagniezNayakRichterSantha08hitting, Reichardt09spanprogram_arxivandfocs, Reichardt10advtight}.  

\smallskip

Now we are ready to construct the algorithm to prove \thmref{t:nonnon}.  Let $W = \gtwoDelta{\rho - \sigma}$.  Let $\{\ket{u_{xj}}\}$ and $\{\ket{v_{xj}}\}$, vectors in $\C^m$, be a solution to Eq.~\eqnref{e:filteredgamma2def} for $\gtwoDelta{\rho - \sigma}$.  The first step is to turn this solution into a more natural geometric object.  If the input alphabet size is $\abs D = k$, let $\ket{\u_i}, \ket{\ut_i}$ be the vectors given in \factref{t:gamma2}.  Notice that we can rewrite the sum from Eq.~\eqnref{e:filteredgamma2def} $\sum_{j \in [n]} (\Delta_j)_{x,y} \braket{u_{xj}}{v_{yj}} = \sum_{j : \, x_j \ne y_j} \braket{u_{xj}}{v_{yj}}$ as simply the inner product between the vectors $\sum_j \ket j \ket{u_{xj}} \ket{\u_{x_j}}$ and $\frac{2(k-1)}{k} \sum_j \ket j \ket{v_{yj}} \ket{\ut_{y_j}}$.  Our algorithm is based on these combined~vectors.  

Let $\H$ be the Hilbert space for the states $\ket{\rho_x}$ and $\ket{\sigma_x}$.  For $y \in \D$, let $\Pi_y = \identity - \sum_j \ketbra j j \otimes \ketbra{\u_{y_j}}{\u_{y_j}} \otimes \identity_{\C^m}$.  Also, define vectors $\ket{t_{y\pm}}, \ket{\psi_y} \in (\C^2 \otimes \H) \oplus (\C^n \otimes \C^k \otimes \C^m)$ by 
\begin{equation*}\begin{split}
\ket{t_{y\pm}} &= \tfrac{1}{\sqrt 2} \big(\ket 0 \otimes \ket{\rho_y} \pm \ket 1 \otimes \ket{\sigma_y}\big) \\
\ket{\psi_y} &= \frac{\epsilon}{\sqrt{W}} \ket{t_{y-}} - \sum_{j \in [n]} \ket j \otimes \ket{\u_{y_j}} \otimes \ket{u_{y j}}
 \enspace .
\end{split}\end{equation*}
Let $\Lambda$ be the projection onto the {orthogonal complement} of the span of the vectors $\{ \ket{\psi_y} \}_{y \in \D}$.  Then our algorithm is given~by: 

\newlength{\algboxwidth} \setlength{\algboxwidth}{5.65in}
\iftwocolumn \setlength{\algboxwidth}{.93 \columnwidth} \fi
\def\algbox#1{\begin{center}\fbox{ \begin{minipage}[l]{\algboxwidth}#1\end{minipage} }\end{center}}
\algbox{
\noindent {\bf Algorithm:} 
On input $x$, let $U_x = (2\Pi_x-\identity)(2\Lambda-\identity)$.  Apply the phase-detection circuit $R(U_x)$, from \thmref{t:phasedetection}, with precision $\Theta = \epsilon^2/W$ and error $\delta = \epsilon$, on input state $\ket 0 \otimes \ket{\rho_x} \otimes \ket{0^b}$.  Output the result.  
}

Note that the reflection $2\Pi_x-\identity$ can be computed with a query to the input oracle and its inverse---compute $x_j$, reflect in $\ket{\mu_{x_j}}$, then uncompute $x_j$.  Therefore, the algorithm uses $O(W/\epsilon^2 \cdot \log \tfrac1\epsilon)$ input queries.  The algorithm is based on repeated reflections.  As sketched in \figref{f:example}, the~algorithm can also be interpreted as a quantum walk on the bipartite graph with biadjacency matrix $\Pi_x + \sum_y \ketbra{y}{\psi_y}$.  

\begin{figure}
\centering
\includegraphics{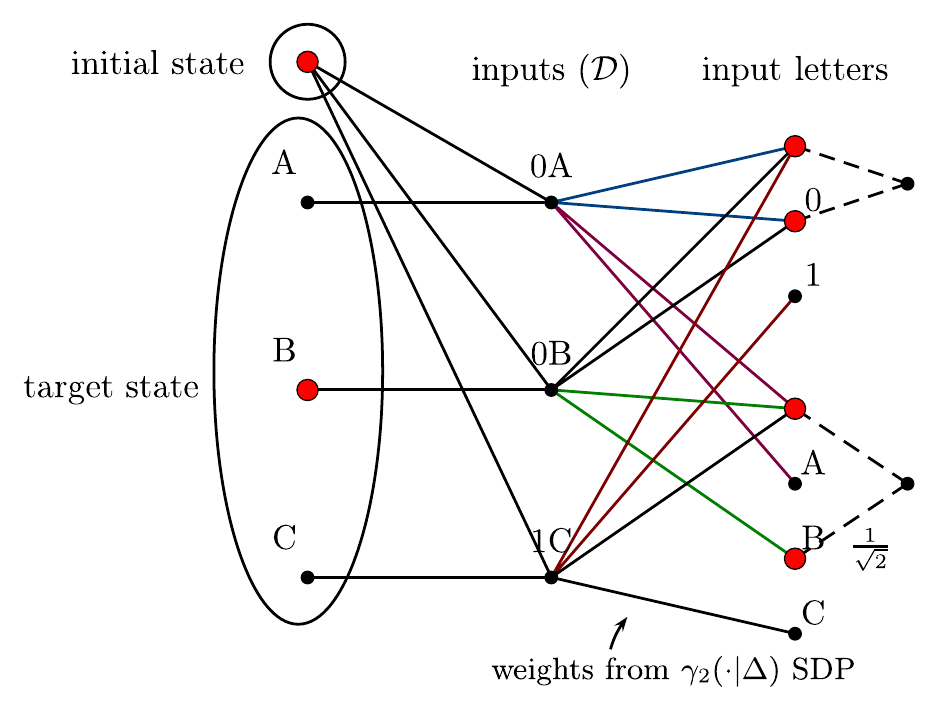}
\caption{The algorithm can be interpreted as running a quantum walk on a bipartite graph with weighted biadjacency matrix~$\Pi_x + \sum_y \ketbra{y}{\psi_y}$.  Shown above is an example graph for the function $f(x_1 x_2) = x_2$ mapping $\D = \{0A, 0B, 1C\} \subset \{0,1\} \times \{\text{A}, \text{B}, \text{C}\}$ to $E = \{\text{A}, \text{B}, \text{C}\}$.} \label{f:example}
\end{figure}

To get some intuition for why this algorithm works, observe that the initial state $\ket 0 \otimes \ket{\rho_x}$ is $\frac{1}{\sqrt 2} (\ket{t_{x+}} + \ket{t_{x-}})$.  The vector $\ket{t_{x+}}$ has large overlap with an eigenvalue-one eigenvector of~$U_x$ (\claimref{t:tplus} below), whereas the cumulative squared overlap of $\ket{t_{x-}}$ with eigenvectors of $U_x$ with small angle is small (\claimref{t:tminus}).  The phase-detection procedure therefore approximately reflects the $\ket{t_{x-}}$ term, giving roughly $\frac{1}{\sqrt 2} (\ket{t_{x+}} - \ket{t_{x-}}) = \ket 1 \otimes \ket{\sigma_x}$---our target state.  

Now we give the formal analysis of the algorithm.  Let $\{ \ket \beta \}$ be a complete set of eigenvectors of~$U_x$ with corresponding eigenvalues $e^{i \theta(\beta)}$, $\theta(\beta) \in (-\pi, \pi]$.  For an angle $\Theta \geq 0$, let $P_\Theta = \sum_{\beta: \abs{\theta(\beta)} \leq \Theta} \ketbra \beta \beta$, and $\overline P_\Theta = \identity - P_\Theta$.  

\begin{claim} \label{t:tplus}
$\norm{P_0 \ket{t_{x+}}}^2 \geq 1 - \epsilon^2$.  
\end{claim}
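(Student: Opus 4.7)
The plan is to exhibit a vector $\ket{w_x}$ lying in the $+1$ eigenspace of $U_x$ with $\braket{t_{x+}}{w_x} = 1$ and $\norm{\ket{w_x}}^2 \le 1 + \epsilon^2$. Since such a $\ket{w_x}$ is fixed by $P_0$, Cauchy--Schwarz gives $1 = \bra{t_{x+}} P_0 \ket{w_x} \le \norm{P_0 \ket{t_{x+}}}\sqrt{1 + \epsilon^2}$, whence $\norm{P_0 \ket{t_{x+}}}^2 \ge 1/(1 + \epsilon^2) \ge 1 - \epsilon^2$. As $\Range \Pi_x \cap \Range \Lambda$ is contained in the $+1$ eigenspace of the product of reflections $U_x$, it suffices to construct $\ket{w_x}$ fixed by both $\Pi_x$ and $\Lambda$.

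The ansatz I would take is $\ket{w_x} = \ket{t_{x+}} + \ket{e_x}$ with $\ket{e_x} = c \sum_{j \in [n]} \ket j \otimes \ket{\ut_{x_j}} \otimes \ket{v_{xj}}$, where $k = \abs D$, the $\ket{\ut_i}$ are those supplied by \factref{t:gamma2}, and $\{\ket{u_{xj}}, \ket{v_{xj}}\}$ is a near-optimal factorization for $W = \gtwoDelta{\rho - \sigma}$. Since $\ket{e_x}$ lives in the second summand, $\Pi_x$ acts on it only by subtracting components along $\ket j \otimes \ket{\u_{x_j}} \otimes \C^m$; but $\braket{\u_{x_j}}{\ut_{x_j}} = 0$ by \factref{t:gamma2}, so $\Pi_x \ket{w_x} = \ket{w_x}$ is automatic. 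To check $\Lambda \ket{w_x} = \ket{w_x}$, I would use orthogonality of the two summands of $(\C^2 \otimes \H) \oplus (\C^n \otimes \C^k \otimes \C^m)$ to write, for each $y \in \D$,
\[
\braket{\psi_y}{w_x} = \tfrac{\epsilon}{\sqrt W} \braket{t_{y-}}{t_{x+}} - c \sum_j \braket{\u_{y_j}}{\ut_{x_j}} \braket{u_{yj}}{v_{xj}}.
\]
The first term equals $\tfrac{\epsilon}{2\sqrt W}(\rho - \sigma)_{y,x}$, and using $\braket{\u_{y_j}}{\ut_{x_j}} = \tfrac{k}{2(k-1)}(\Delta_j)_{y,x}$ from \factref{t:gamma2} together with the filtered factorization $(\rho - \sigma)_{y,x} = \sum_j (\Delta_j)_{y,x}\braket{u_{yj}}{v_{xj}}$ reduces the second term to $c \tfrac{k}{2(k-1)}(\rho - \sigma)_{y,x}$. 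Setting $c = \tfrac{\epsilon(k-1)}{k\sqrt W}$ makes the two terms cancel for every $y$.

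Finally, orthogonality of the two summands gives $\braket{t_{x+}}{w_x} = \norm{\ket{t_{x+}}}^2 = 1$ and $\norm{\ket{w_x}}^2 = 1 + c^2 \sum_j \norm{\ket{v_{xj}}}^2 \le 1 + c^2 W = \epsilon^2 (k-1)^2/k^2 + 1 \le 1 + \epsilon^2$, completing the argument. The main conceptual step is guessing the form of $\ket{e_x}$: the dual vectors $\ket{\ut_i}$ from \factref{t:gamma2} do double duty, as orthogonality to $\ket{\u_i}$ renders $\Pi_x$ transparent to $\ket{e_x}$, while the cross inner product $\braket{\u_{y_j}}{\ut_{x_j}}$ reproduces precisely the query-distinguishing pattern $(\Delta_j)_{y,x}$, so that the filtered factorization of $\rho - \sigma$ cancels the $\ket{t_{y-}}$ overlap appearing in $\braket{\psi_y}{t_{x+}}$.
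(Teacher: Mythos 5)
Your proposal is correct and follows essentially the same route as the paper: the same witness vector $\ket{t_{x+}} + \tfrac{\epsilon(k-1)}{k\sqrt W}\sum_j \ket j\otimes\ket{\ut_{x_j}}\otimes\ket{v_{xj}}$, the same cancellation of $\braket{\psi_y}{\cdot}$ via the filtered factorization, and the same orthogonality $\braket{\u_{x_j}}{\ut_{x_j}}=0$ to handle $\Pi_x$. The only difference is that you make explicit the Cauchy--Schwarz step converting the overlap-to-norm ratio into a bound on $\norm{P_0\ket{t_{x+}}}$, which the paper leaves implicit.
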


\begin{proof}
We give a state $\ket \varphi$ such that $U_x \ket \varphi = \ket \varphi$ and $\abs{\braket{t_{x+}}{\varphi}}^2 / \norm{\ket \varphi}^2 \ge 1 - \epsilon^2$.  Let 
\begin{equation*}
\ket \varphi = \ket{t_{x+}} + \frac12 \frac{\epsilon}{\sqrt W} \frac{2(k-1)}{k} \sum_{j \in [n]} \ket{j} \otimes \ket{\ut_{x_j}} \otimes \ket{v_{xj}}
 \enspace .
\end{equation*}
Then $\ket \varphi$ is orthogonal to all $\ket{\psi_y}$, since $\braket{t_{y-}}{t_{x+}} = \frac12 \left(\braket{\rho_y}{\rho_x} - \braket{\sigma_y}{\sigma_x} \right)$ implies 
\begin{equation*}
\braket{\psi_y}{\varphi} = \frac{\epsilon}{\sqrt W} \braket{t_{y-}}{t_{x+}} - \frac12 \frac{\epsilon}{\sqrt W} \sum_{j : x_j \neq y_j} \braket{u_{yj}}{v_{xj}} 
= 0
 \enspace .
\end{equation*}
Thus $\Lambda \ket \varphi = \ket \varphi$.  Since $\Pi_x \ket{t_{x+}} = \ket{t_{x+}}$ and $\braket{\mu_{x_j}}{\nu_{x_j}} = 0$ for all $j \in [n]$, also $\Pi_x \ket \varphi = \ket \varphi$.  
\end{proof}

\begin{claim} \label{t:tminus}
For all $\Theta \geq 0$, $\norm{P_\Theta \ket{t_{x-}}}^2 \leq \frac{\Theta^2}{4} \big( \frac{W^2}{\epsilon^2}+1 \big)$.  
\end{claim}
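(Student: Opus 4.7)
The plan is to apply the Effective Spectral Gap Lemma (\lemref{t:ref_lemma}) to the witness vector $\ket{\psi_x}$, which is almost tailor-made for this purpose: $\ket{\psi_x}$ lies in the kernel of $\Lambda$ by definition, so all that remains is to compute its image under $\Pi_x$ and to bound its norm.

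First I would compute $\Pi_x \ket{\psi_x}$ directly.  On the first summand $\C^2 \otimes \H$, the projector $\Pi_x = \identity - \sum_j \ket j \bra j \otimes \ketbra{\mu_{x_j}}{\mu_{x_j}} \otimes \identity_{\C^m}$ acts as the identity, so it preserves $\ket{t_{x-}}$.  On each term $\ket j \otimes \ket{\mu_{x_j}} \otimes \ket{u_{xj}}$ appearing in the sum in the definition of $\ket{\psi_x}$, the projector $\Pi_x$ annihilates the vector, since $\mu_{x_j}$ is a unit vector.  Therefore
\[
\Pi_x \ket{\psi_x} = \frac{\epsilon}{\sqrt W} \ket{t_{x-}}
 \enspace .
\]
Combined with $\Lambda \ket{\psi_x} = 0$, \lemref{t:ref_lemma} immediately gives
\[
\frac{\epsilon}{\sqrt W} \norm{P_\Theta \ket{t_{x-}}}
= \norm{P_\Theta \Pi_x \ket{\psi_x}}
\le \tfrac{\Theta}{2} \norm{\ket{\psi_x}}
 \enspace .
\]

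Second, I would bound $\norm{\ket{\psi_x}}^2$.  The two summands in the definition of $\ket{\psi_x}$ live in the orthogonal direct summands $\C^2 \otimes \H$ and $\C^n \otimes \C^k \otimes \C^m$, so
\[
\norm{\ket{\psi_x}}^2 = \frac{\epsilon^2}{W} \norm{\ket{t_{x-}}}^2 + \sum_{j \in [n]} \norm{\ket{\mu_{x_j}}}^2 \norm{\ket{u_{xj}}}^2
 \enspace .
\]
Since $\ket{\rho_x}$ and $\ket{\sigma_x}$ are unit states, $\norm{\ket{t_{x-}}}^2 = 1$; since each $\ket{\mu_{x_j}}$ is a unit vector and $\{\ket{u_{xj}}\},\{\ket{v_{yj}}\}$ achieve the value of $W = \gtwoDelta{\rho - \sigma}$, we have $\sum_j \norm{\ket{u_{xj}}}^2 \le W$.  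Hence $\norm{\ket{\psi_x}}^2 \le \epsilon^2/W + W$.

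Combining the two bounds yields
\[
\norm{P_\Theta \ket{t_{x-}}}^2 \le \frac{W}{\epsilon^2} \cdot \frac{\Theta^2}{4} \Big( \frac{\epsilon^2}{W} + W \Big) = \frac{\Theta^2}{4} \Big(1 + \frac{W^2}{\epsilon^2}\Big)
 \enspace ,
\]
which is the claimed bound.  The entire argument is essentially a three-line calculation; the only substantive step is recognizing that $\ket{\psi_x}$ was defined precisely so that $\Pi_x \ket{\psi_x}$ is proportional to $\ket{t_{x-}}$, after which the spectral-gap lemma does all the work.  I do not anticipate any serious obstacle — the main thing to be careful about is checking the orthogonality of the two summands and that $\Pi_x$ truly kills each $\ket j \otimes \ket{\mu_{x_j}} \otimes \ket{u_{xj}}$ term.
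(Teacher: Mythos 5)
Your proof is correct and is essentially the paper's own argument: the paper likewise applies \lemref{t:ref_lemma} with $\Pi = \Pi_x$ and the witness $\ket w = \tfrac{\sqrt W}{\epsilon}\ket{\psi_x}$ (just a rescaling of your $\ket{\psi_x}$), using $\Lambda\ket{\psi_x}=0$ and $\Pi_x\ket{\psi_x} = \tfrac{\epsilon}{\sqrt W}\ket{t_{x-}}$. You simply spell out the norm computation $\norm{\ket{\psi_x}}^2 \le \epsilon^2/W + W$ that the paper leaves implicit, so there is nothing to correct.
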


\begin{proof}
Apply \lemref{t:ref_lemma} with $\Pi = \Pi_x$ and $\ket w = \tfrac{\sqrt W}{\epsilon} \ket{\psi_x}$.  Then $\Delta \ket w = 0$ and $\ket{t_{x-}} = \Pi_x \ket w$.  
\end{proof}

The following proposition completes the proof of \thmref{t:nonnon}: 

\begin{proposition}
If $W \geq \epsilon$, then with the parameters $\delta = \epsilon$ and $\Theta = \epsilon^2 / W$, 
\[
\Bignorm{ R(U_x) \ket 0 \otimes \ket{\rho_x} \otimes \ket{0^b} - \ket 1 \otimes \ket{\sigma_x} \otimes \ket{0^b} } < 4 \epsilon
 \enspace .
\]
\end{proposition}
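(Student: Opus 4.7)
The plan is to write the initial state as $\ket 0 \otimes \ket{\rho_x} = \tfrac{1}{\sqrt 2}(\ket{t_{x+}} + \ket{t_{x-}})$ and observe that the target state takes the very similar form $\ket 1 \otimes \ket{\sigma_x} = \tfrac{1}{\sqrt 2}(\ket{t_{x+}} - \ket{t_{x-}})$. It therefore suffices to show that $R(U_x)$ approximately fixes the $\ket{t_{x+}}$ summand (tensored with $\ket{0^b}$) and approximately negates the $\ket{t_{x-}}$ summand. The two claims already proved supply exactly the spectral information needed: \claimref{t:tplus} says that $\ket{t_{x+}}$ lies almost entirely in the $\theta = 0$ eigenspace of $U_x$, while for $\Theta = \epsilon^2/W$ the hypothesis $W \geq \epsilon$ reduces the bound of \claimref{t:tminus} to $\norm{P_\Theta \ket{t_{x-}}}^2 \leq \tfrac14 \epsilon^2 + \tfrac14 \epsilon^4/W^2 \leq \tfrac12 \epsilon^2$, so $\ket{t_{x-}}$ lies almost entirely in the span of eigenvectors of $U_x$ with $\abs{\theta(\beta)} > \Theta$.

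With this in hand I would split each summand into its good and bad spectral parts: write $\ket{t_{x+}} = a + a'$ with $a = P_0 \ket{t_{x+}}$ and $\norm{a'} \leq \epsilon$, and $\ket{t_{x-}} = b + b'$ with $b = P_\Theta \ket{t_{x-}}$ and $\norm{b} \leq \epsilon/\sqrt 2$. By \thmref{t:phasedetection}, $R(U_x)$ is the identity on $a \otimes \ket{0^b}$, and on $b' \otimes \ket{0^b}$ it differs from $-\identity$ by an additive vector of norm less than $\delta \norm{b'} \leq \epsilon$. The remaining small-norm pieces $a'$ and $b$ I would handle by unitarity of $R(U_x)$ alone: replacing $R(U_x)\, a' \otimes \ket{0^b}$ by $a' \otimes \ket{0^b}$ and $R(U_x)\, b \otimes \ket{0^b}$ by $-b \otimes \ket{0^b}$ each costs at most twice the norm of the corresponding vector.

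Collecting the four contributions via the triangle inequality and the $\tfrac{1}{\sqrt 2}$ prefactor from the decomposition, the total deviation between $R(U_x)\ket 0 \otimes \ket{\rho_x} \otimes \ket{0^b}$ and $\ket 1 \otimes \ket{\sigma_x} \otimes \ket{0^b}$ will be bounded by $\tfrac{1}{\sqrt 2}\bigl(2\epsilon + \sqrt 2\,\epsilon + \epsilon\bigr) = (1 + 3/\sqrt 2)\,\epsilon < 4\epsilon$, completing the proof. The main subtlety will be extending the single-eigenvector bound in \thmref{t:phasedetection} to a superposition: one must observe that the orthonormality of $\{\ket \beta\}$ implies that the cumulative error on any $\ket \gamma$ orthogonal to the small-angle eigenspace is at most $\delta \norm{\ket \gamma}$ rather than $\delta$. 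Once that is noted, the argument is essentially bookkeeping, and the hypothesis $W \geq \epsilon$ enters precisely to tame the term $\epsilon^4/W^2$ in the $\ket{t_{x-}}$ estimate.
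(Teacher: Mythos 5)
Your proposal is correct and follows essentially the same route as the paper: the same split of the initial and target states into $\tfrac{1}{\sqrt 2}(\ket{t_{x+}} \pm \ket{t_{x-}})$, the same use of Claims~\ref{t:tplus} and~\ref{t:tminus} together with the phase-detection guarantee, and the same triangle-inequality bookkeeping (the paper groups the terms as $\norm{(R(U_x)-\identity)\ket{t_{x+}}\ket{0^b}}$ and $\norm{(R(U_x)+\identity)\ket{t_{x-}}\ket{0^b}}$ before splitting off the small spectral pieces, which is just a reorganization of your four contributions). Your final constant $(1+3/\sqrt 2)\epsilon$ matches the paper's bound of roughly $3.1\epsilon < 4\epsilon$, and your remark about extending the eigenvector statement of Theorem~\ref{t:phasedetection} to superpositions by orthogonality is exactly the observation the theorem's second bullet already encodes.
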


\begin{proof}
We have  
\begin{align*}
\Bignorm{ R(U_x) \ket 0 \ket{\rho_x} \ket{0^b} - \ket 1 \ket{\sigma_x} \ket{0^b} }
&= \frac{1}{\sqrt 2} \Bignorm{ R(U_x) (\ket{t_{x+}} + \ket{t_{x-}}) \ket{0^b} - (\ket{t_{x+}} - \ket{t_{x-}}) \ket{0^b} } \\
&\leq \frac{1}{\sqrt 2} \Bignorm{ (R(U_x) - \identity) \ket{t_{x+}} \ket{0^b} } + \frac{1}{\sqrt 2} \Bignorm{ (R(U_x) + \identity) \ket{t_{x-}} \ket{0^b} }
 \enspace .
\end{align*}
By \thmref{t:phasedetection}, the first term equals $\tfrac{1}{\sqrt 2} \norm{(R(U_x) - \identity) \overline P_0 \ket{t_{x+}} \ket{0^b}} \leq \sqrt 2 \norm{\overline P_0 \ket{t_{x+}}} \leq \sqrt 2 \epsilon$, by \claimref{t:tplus}.  The second term is at most $\tfrac{1}{\sqrt 2} \norm{ (R(U_x) + \identity) \overline P_\Theta \ket{t_{x-}} \ket{0^b} } + \sqrt 2 \norm{P_\Theta \ket{t_{x-}}} < \frac{\delta}{\sqrt 2} + \frac{\Theta}{\sqrt 2} \sqrt{\frac{W^2}{\epsilon^2}+1}$, by \thmref{t:phasedetection} and \claimref{t:tminus}.  Now substitute our choices of parameters and use $\frac{W^2}{\epsilon^2} + 1 \leq 2 W^2 / \epsilon^2$ to conclude the proof.  
\end{proof}

Notice that the constant factor hidden by the big-$O$ notation in \thmref{t:nonnon} is the same as the constant hidden in \thmref{t:phasedetection} for the number of calls to $U$ and $U^{-1}$, and is less than~$100$.

\subsection{Lower bound for state conversion}

We now show how $\gtwoDelta{\rho - \sigma}$ can be used to show query complexity lower bounds for the state-conversion problem.  The argument has two parts.  First, we show that $\gtwoDelta{\rho - \sigma}$ lower-bounds~the complexity of {\em exactly} converting $\rho$ to $\sigma$.  Second, we develop an output condition constraining those~$\sigma'$ that are viable final Gram matrices of a successful algorithm with error~$\epsilon$.  These two~parts have been present in all previous adversary arguments, but the separation has not been fully~recognized.  

Once these two steps are finished, the lower bound naturally becomes $\min_{\sigma' \approx_\epsilon \sigma} \gtwoDelta{\rho-\sigma'}$, where the notion of approximation is given by the output condition.  This paradigm follows the use of approximation norms for lower bounds in communication complexity~\cite{LeeShraibman07survey}.  

We begin with the lower bound for exact state conversion: 

\begin{lemma} \label{t:gamma2lowerbound}
Suppose that $\sigma$ can be reached from $\rho$ with one query.  Then $\gtwoDelta{\rho - \sigma} \le 2$.  
\end{lemma}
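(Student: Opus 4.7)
The goal is to exhibit an explicit factorization of $\rho-\sigma$ through the matrices $\Delta_j$ whose vector norms are bounded by $2$.

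First I would normalize the one-query algorithm.  A general one-query algorithm transforming $\ket{\rho_x}$ to $\ket{\sigma_x}$ has the form $\ket{\sigma_x} = U_1\, O_x\, U_0 \ket{\rho_x}$ for query-independent unitaries $U_0, U_1$ and an oracle $O_x$ satisfying $O_x \ket j \ket a \ket w = \ket j \ket{a + x_j} \ket w$ (the addition taking place in an appropriate group on the alphabet~$D$).  Because $U_0$ and $U_1$ are query-independent, absorbing $U_0$ into $\ket{\rho_x}$ and $U_1^\dagger$ into $\ket{\sigma_x}$ preserves both Gram matrices, so without loss of generality $\ket{\sigma_x} = O_x \ket{\rho_x}$.

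Next I would expand the initial state in the query-register basis, writing $\ket{\rho_x} = \sum_{j,a} \ket{j}\ket{a}\ket{w_{x,j,a}}$ for workspace vectors $\ket{w_{x,j,a}}$ with $\sum_{j,a} \norm{\ket{w_{x,j,a}}}^2 = 1$.  Then
\[
\rho_{x,y} = \sum_{j,a} \braket{w_{x,j,a}}{w_{y,j,a}}, \qquad
\sigma_{x,y} = \sum_{j,a} \braket{w_{x,j,a}}{w_{y,j,a+x_j-y_j}},
\]
and so $\rho_{x,y} - \sigma_{x,y} = \sum_{j,a} [\braket{w_{x,j,a}}{w_{y,j,a}} - \braket{w_{x,j,a}}{w_{y,j,a+x_j-y_j}}]$, where the summand vanishes whenever $x_j = y_j$.

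For each $x \in \D$ and $j \in [n]$, I would define
\[
\ket{u_{xj}} = \ket 0 \otimes \sum_a \ket{a} \ket{w_{x,j,a}} \;+\; \ket 1 \otimes \sum_a \ket{a+x_j}\ket{w_{x,j,a}},
\]
and likewise $\ket{v_{yj}}$ with a minus sign on the $\ket 1$ branch.  A direct computation using the orthonormality of the flag and shift registers gives $\braket{u_{xj}}{v_{yj}} = \sum_a [\braket{w_{x,j,a}}{w_{y,j,a}} - \braket{w_{x,j,a}}{w_{y,j,a+x_j-y_j}}]$, so summing over $j$ reproduces $\rho_{x,y} - \sigma_{x,y}$; moreover, since the summand already vanishes at $x_j = y_j$, we may insert the factor $(\Delta_j)_{x,y}$ at no cost, yielding a valid filtered factorization.

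Finally, since the $\ket 0$ and $\ket 1$ branches are orthogonal and each has squared norm $\sum_a \norm{\ket{w_{x,j,a}}}^2$, we get $\sum_j \norm{\ket{u_{xj}}}^2 = 2 \sum_{j,a} \norm{\ket{w_{x,j,a}}}^2 = 2 \norm{\ket{\rho_x}}^2 = 2$, and identically $\sum_j \norm{\ket{v_{yj}}}^2 = 2$.  By the definition of the filtered $\gamma_2$ norm this forces $\gtwoDelta{\rho - \sigma} \le 2$.  The only subtlety I expect is bookkeeping around the additive shift $a \mapsto a + x_j$, which is why a two-dimensional flag register is needed to split a single inner product into the required difference; everything else is routine.
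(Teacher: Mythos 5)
Your proof is correct and is essentially the paper's argument written out in the query-register basis: your two-branch flag vectors $\ket 0\otimes(\cdot)+\ket 1\otimes(\cdot)$ and $\ket 0\otimes(\cdot)-\ket 1\otimes(\cdot)$ are exactly the paper's concatenated vectors $\ket{u_{xj}}=(\Gamma_j\ket{\rho_x},\,O_x\Gamma_j\ket{\rho_x})$ and $\ket{v_{xj}}=(\Gamma_j\ket{\rho_x},\,-O_x\Gamma_j\ket{\rho_x})$, where $\Gamma_j$ projects onto query index $j$, and the norm accounting giving the bound $2$ is identical.
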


\begin{proof}
Let $\Gamma_j$ project onto the query register containing index~$j$.  For $x \in \D$, let $O_x$ be the unitary query oracle.  It satisfies $O_x^\dagger O_y \Gamma_j = \Gamma_j$ when $x_j = y_j$.  By assumption, $\sigma_{x,y} = \bra{\rho_x} O_x^\dagger O_y \ket{\rho_y}$.  Then 
\begin{equation*}
(\rho-\sigma)_{x,y}
= \sum_j \bra{\rho_x}\Gamma_j\ket{\rho_y} - \bra{\rho_x}O_x^\dagger O_y\Gamma_j\ket{\rho_y} 
= \sum_{\substack{j : x_j \ne y_j}} \bra{\rho_x}\Gamma_j\ket{\rho_y} - \bra{\rho_x}O_x^\dagger O_y\Gamma_j\ket{\rho_y}
 \enspace .
\end{equation*}
Now define $\ket{u_{xj}} = (\Gamma_j \ket{\rho_x}, O_x \Gamma_j \ket{\rho_x})$ and $\ket{v_{xj}} = (\Gamma_j \ket{\rho_x}, -O_x \Gamma_j \ket{\rho_x})$.  
Then $\braket{u_{xj}}{v_{yj}} = \bra{\rho_x} (\identity - O_x^\dagger O_y) \Gamma_j \ket{\rho_y}$, as desired.  Furthermore, $\sum_j \norm{\ket{u_{xj}}}{}^2 = \sum_j \norm{\ket{v_{xj}}}{}^2 = 2$.  
\end{proof}

Lower bounds for approximate query problems follow by combining this lemma with appropriate output conditions.  For example, in the functional case, one can use a condition based on $\ell_\infty$ distance~\cite{Ambainis00adversary}, or the full output condition from~\cite{BarnumSaksSzegedy03adv}.  The output condition traditionally used for the general adversary method is based on the $\gamma_2$ norm~\cite{HoyerLeeSpalek07negativeadv}.  This condition has the advantage that it is an SDP, it extends to state conversion, and, as we now show, it is tight.  

\begin{lemma} \label{t:gamma2outputconditiontight}
Let $\{ \ket{\rho_x} \}, \{ \ket{\sigma_x} \} \subset \H$ be finite sets of vectors with the same index set, and let~$\rho, \sigma$ be their respective Gram matrices.  Then 
\begin{itemize}
\item
If $\Re{(\braket{\rho_x}{\sigma_x})}  \geq \sqrt{1 - \epsilon}$ for every $x$, then $\gamma_2(\rho - \sigma) \leq 2 \sqrt \epsilon$~\cite{HoyerLeeSpalek07negativeadv}.  
\item
If $\gamma_2(\rho - \sigma) \leq \epsilon$, then there exists a unitary~$U$ such that $\bra{\rho_x} U \ket{\sigma_x} \geq 1 - \sqrt{2 \epsilon}$ for all~$x$.  
\end{itemize}
\end{lemma}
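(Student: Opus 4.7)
My plan for the first direction is to combine the symmetric factorization
\[
(\rho-\sigma)_{x,y} = \tfrac{1}{2}\braket{\rho_x+\sigma_x}{\rho_y-\sigma_y} + \tfrac{1}{2}\braket{\rho_x-\sigma_x}{\rho_y+\sigma_y}
\]
with a free scaling parameter $t>0$.  Setting $\ket{u_x} = \tfrac{t}{2}\ket{\rho_x+\sigma_x}\oplus\tfrac{1}{t}\ket{\rho_x-\sigma_x}$ and $\ket{v_y}=\tfrac{1}{t}\ket{\rho_y-\sigma_y}\oplus\tfrac{t}{2}\ket{\rho_y+\sigma_y}$ achieves $\braket{u_x}{v_y} = (\rho-\sigma)_{x,y}$.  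Writing $b_x := \|\rho_x-\sigma_x\|^2$ and using $\|\rho_x+\sigma_x\|^2 = 4-b_x$, a brief calculation gives $\|u_x\|^2 = \|v_x\|^2 = t^2(1-b_x/4) + b_x/t^2$, which (for $t<\sqrt 2$) is monotone increasing in $b_x$.  Optimizing $t$ at the worst case $b := \max_x b_x$ yields the clean bound $\gamma_2(\rho-\sigma)\leq\sqrt{b(4-b)}$.  The hypothesis forces $b\leq 2-2\sqrt{1-\epsilon}$, and the algebraic identity $(2-2c)(2+2c)=4(1-c^2)$ applied with $c=\sqrt{1-\epsilon}$ produces $b(4-b)\leq 4\epsilon$, hence $\gamma_2(\rho-\sigma)\leq 2\sqrt{\epsilon}$.

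For the second direction, my plan is to construct common-space realizations $\ket{R_x},\ket{S_x}$ of $\rho$ and $\sigma$ with $\Re\braket{R_x}{S_x}\geq 1-\sqrt{2\epsilon}$; the required unitary is then the composition $U = V_R^\dagger V_S$ of the isometries $V_R:\ket{\rho_x}\mapsto\ket{R_x}$ and $V_S:\ket{\sigma_x}\mapsto\ket{S_x}$, extended to a unitary on a sufficiently large Hilbert space, so that $\bra{\rho_x}U\ket{\sigma_x}=\braket{R_x}{S_x}$.  Equivalently, it suffices to exhibit a cross-Gram matrix $M$ such that $\bigl(\begin{smallmatrix}\rho & M\\ M^\dagger & \sigma\end{smallmatrix}\bigr)\succeq 0$ and $\Re M_{x,x}\geq 1-\sqrt{2\epsilon}$ for every $x$, since any such $M$ is realized by some pair of isometric embeddings.

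The main obstacle is producing such an $M$ from the $\gamma_2$ witness.  The SDP characterization of $\gamma_2(\rho-\sigma)\leq\epsilon$ furnishes PSD matrices $P,Q$ with $P_{xx},Q_{yy}\leq\epsilon$ and $\bigl(\begin{smallmatrix}P & \rho-\sigma\\ (\rho-\sigma)^\dagger & Q\end{smallmatrix}\bigr)\succeq 0$; adding the PSD rank-factor $\bigl(\begin{smallmatrix}I\\I\end{smallmatrix}\bigr)\sigma\bigl(\begin{smallmatrix}I & I\end{smallmatrix}\bigr)$ produces $\bigl(\begin{smallmatrix}P+\sigma & \rho\\ \rho^\dagger & Q+\sigma\end{smallmatrix}\bigr)\succeq 0$, which has the desired off-diagonal block $\rho$ but diagonal blocks inflated by $P,Q$.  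To convert this into a block with diagonals exactly $\rho,\sigma$, I would read off via Schur complement a contraction $K$ with $\rho = (P+\sigma)^{1/2}K(Q+\sigma)^{1/2}$, and then apply a matrix-perturbation estimate—exploiting operator monotonicity of the square root together with the diagonal bound $P_{xx},Q_{yy}\leq\epsilon$—to argue that substituting $(P+\sigma)^{1/2}\mapsto\rho^{1/2}$ and $(Q+\sigma)^{1/2}\mapsto\sigma^{1/2}$ perturbs each diagonal entry of the resulting cross block by at most $\sqrt{2\epsilon}$, delivering the desired bound on $\Re M_{x,x}$.
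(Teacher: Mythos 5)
Your first bullet is fine: it is the same symmetric decomposition $(\rho-\sigma)_{x,y}=\tfrac12\braket{\rho_x+\sigma_x}{\rho_y-\sigma_y}+\tfrac12\braket{\rho_x-\sigma_x}{\rho_y+\sigma_y}$ that the paper uses, sharpened by a scaling parameter $t$; the computation $\sqrt{b(4-b)}\le 2\sqrt\epsilon$ is correct (assuming, as the paper also implicitly does, that the states are unit vectors, and noting that the optimal $t$ indeed satisfies $t\le\sqrt2$ when $b\le 2$).

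The second bullet, however, has a genuine gap at exactly the step that carries the mathematical content. Your reduction to exhibiting $M$ with $\bigl(\begin{smallmatrix}\rho & M\\ M^\dagger & \sigma\end{smallmatrix}\bigr)\succeq 0$ and $\Re M_{x,x}\ge 1-\sqrt{2\epsilon}$ is legitimate (up to the standard dilation of the contraction $V_R^\dagger V_S$), and the observation that $\bigl(\begin{smallmatrix}P+\sigma & \rho\\ \rho & Q+\sigma\end{smallmatrix}\bigr)\succeq 0$ is correct. But the ``matrix-perturbation estimate'' you invoke to pass from $(P+\sigma)^{1/2}K(Q+\sigma)^{1/2}=\rho$ to $M=\rho^{1/2}K\sigma^{1/2}$ with diagonal loss at most $\sqrt{2\epsilon}$ is asserted, not proved, and the tools you name do not deliver it. Operator monotonicity gives $(Q+\sigma)^{1/2}\succeq\sigma^{1/2}$, but what you need is a bound on the \emph{column} norms $\norm{[(Q+\sigma)^{1/2}-\sigma^{1/2}]e_x}$, i.e.\ on $e_x^\dagger T^2e_x$ for $T=(Q+\sigma)^{1/2}-\sigma^{1/2}\succeq 0$, and this is not controlled by diagonal information alone: the operator inequality $(A^{1/2}-B^{1/2})^2\preceq A-B$ is false in general, and already for $\sigma=\bigl(\begin{smallmatrix}1&1\\1&1\end{smallmatrix}\bigr)$, $Q=\mathrm{diag}(\epsilon,0)$ one finds $\norm{Te_2}\approx\sqrt\epsilon/2$ even though $Q_{2,2}=0$, so any correct estimate must be global and is genuinely nontrivial (note also that $\norm{Q}$ can be as large as $n\epsilon$, so generic square-root perturbation bounds give only $\sqrt{n\epsilon}$). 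The left-hand substitution is worse: $P+\sigma-\rho=P-(\rho-\sigma)$ is in general \emph{not} PSD, so $(P+\sigma)^{1/2}$ and $\rho^{1/2}$ are not even operator-ordered and monotonicity does not apply at all. Finally, even granting both substitutions individually, you would need the two errors plus cross terms to total $\sqrt{2\epsilon}$, which is not argued. The paper sidesteps all of this: it takes arbitrary Gram factorizations $u_x$ of $\rho$ and $v_x$ of $\sigma$, a $\gamma_2$-witness $(\rho-\sigma)_{x,y}=\braket{\alpha_x}{\beta_y}$ with $\norm{\alpha_x},\norm{\beta_y}\le\sqrt\epsilon$, and sets $p_x=\tfrac12(\alpha_x-\beta_x)$, $q_x=\tfrac12(\alpha_x+\beta_x)$; Hermiticity of $\rho-\sigma$ makes the augmented vectors $(u_x,p_x)$ and $(v_x,q_x)$ have identical Gram matrices, unitary freedom of factorizations gives the unitary directly, and $\braket{\alpha_x}{\beta_x}=0$ (zero diagonal) gives $\norm{p_x}^2=\norm{q_x}^2\le\epsilon/2$, whence the $1-\sqrt{2\epsilon}$ bound by an elementary expansion --- no square roots or perturbation theory are needed. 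To complete your route you would have to supply (and prove) the missing column-wise square-root perturbation lemma; as written, that is the heart of the claim and it is missing.
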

The second item has recently been improved by \cite{LeeRoland11directproduct} to $\gamma_2(\rho - \sigma) \leq \epsilon$ implies there exists a unitary~$U$ such that $\bra{\rho_x} U \ket{\sigma_x} \geq 1 - \epsilon/2$ for all~$x$.  

\begin{proof}
For the first part of the lemma, we can factorize $\rho-\sigma$ as  
\[ 
(\rho-\sigma)_{x,y} = \tfrac12 \left(\braket{\rho_x+\sigma_x}{\rho_y - \sigma_y} + \braket{\rho_x-\sigma_x}{\rho_y + \sigma_y}\right)
 \enspace .
\]  
Thus by a triangle inequality $\gamma_2(\rho-\sigma) \leq \max_{x,y} \ \norm{\rho_x+\sigma_x}\norm{\rho_y-\sigma_y} \le 2 \max_y \ \sqrt{2-2\Re{(\braket{\rho_y}{\sigma_y}})} \le 2 \sqrt \epsilon$.  For the last inequality we used $\sqrt{1-\epsilon} \ge 1- \tfrac\epsilon2$.  

To prove the second part, let $\{u_x\}$ and $\{v_x\}$ be arbitrary factorizations of $\rho$ and $\sigma$, respectively.  
As $\gamma_2(\rho-\sigma) \le \epsilon$, there exists a factorization $(\rho-\sigma)_{x,y} = \braket{\alpha_x}{\beta_y}$ with $\norm{\alpha_x},\norm{\beta_y} \le\sqrt{\epsilon}$.  Then  
\begin{align*}
\braket{v_x}{v_y} &= \braket{u_x}{u_y} - \braket{\alpha_x}{\beta_y}\\
&= \braket{u_x}{u_y} - \frac{1}{2}\braket{\alpha_x}{\beta_y} - \frac{1}{2} \braket{\beta_x}{\alpha_y} 
\end{align*}
as $\rho - \sigma$ is Hermitian.  Let $p_x = \frac{1}{2} (\alpha_x - \beta_x)$ and $q_x = \frac{1}{2} (\alpha_x + \beta_x)$.  Then the previous equation implies
\[
\braket{(u_x,p_x)}{(u_y,p_y)} =  \braket{(v_x,q_x)}{(v_y,q_y)}
 \enspace .
\] 
By unitary freedom of square roots, if $AA^\dagger = BB^\dagger$ for any two matrices $A$ and $B$, then $AU = B$ for some unitary $U$.  So there is a unitary $U$ such that $(u_x,p_x) U = (v_x,q_x)$.  As $\braket{\alpha_x}{\beta_x}=0$ because $\rho -\sigma$ has zeros on the diagonal, we have $\norm{p_x}^2=\norm{q_x}^2 \le \epsilon/2$.  
\begin{align*}
\braket{(u_x,0)U}{(v_x,0)} 
&= \braket{(u_x, p_x)U}{(v_x, q_x)} - \braket{(u_x, 0)U}{(0,q_x)} - \braket{(0,p_x)U}{(v_x, 0)} - \braket{(0,p_x)U}{(0,q_x)} \hspace{-.2in} \\
&\ge 1 + \norm{q_x}^2 - \norm{p_x} - \norm{q_x} - \norm{p_x}\norm{q_x} \\
&\ge 1 - \sqrt{2\epsilon}
 \enspace .  \qedhere
\end{align*}
\end{proof}

Based on \lemref{t:gamma2outputconditiontight}, we immediately derive tight SDPs for the query complexities of state conversion:  

\begin{theorem} \label{t:nonnonnon}
For $\delta > 0$, let 
\begin{equation}\begin{split}
q_\delta(\rho, \sigma) &= \;\min_{\sigma' \succeq 0} \;\, \Big\{ \gtwoDelta{\rho - \sigma'} : \gamma_2(\sigma' - \sigma) \le \delta \Big\} \\
q_\delta^{nc}(\rho, \sigma) &= \min_{\sigma', S \succeq 0} \Big\{ \gtwoDelta{\rho - \sigma'} : \gamma_2(\sigma' - \sigma \circ S) \le \delta, \; S \circ \identity = \identity \Big\} 
 \enspace .
\end{split}\end{equation}
Then the bounded-error coherent and non-coherent state-conversion query complexities satisfy 
\begin{equation}\begin{split}
\Omega\Big( q_{2 \sqrt{2 \epsilon}}(\rho, \sigma) \Big) \leq Q_\epsilon(\rho, \sigma) \leq O\Big(q_{\epsilon^4/16}(\rho, \sigma) \frac{\log(1/\epsilon)}{\epsilon^2}\Big) \\
\Omega\Big( q_{2 \sqrt{2 \epsilon}}^{nc}(\rho, \sigma) \Big) \leq Q_\epsilon^{nc}(\rho, \sigma) \leq O\Big(q_{\epsilon^4/16}^{nc}(\rho, \sigma) \frac{\log(1/\epsilon)}{\epsilon^2}\Big) 
 \enspace .
\end{split}\end{equation}
\end{theorem}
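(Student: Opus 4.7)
The plan is to derive both bounds by assembling three ingredients developed in the preceding subsections: the exact-conversion lower bound \lemref{t:gamma2lowerbound}, the state-conversion algorithm \thmref{t:nonnon}, and the two-sided output-condition lemma \lemref{t:gamma2outputconditiontight}. Conceptually, \lemref{t:gamma2outputconditiontight} interpolates cleanly between ``the output condition holds with error $\epsilon$'' and ``the final Gram matrix is $\gamma_2$-close to the target,'' so the bounded-error complexity should equal, up to constants in $\epsilon$, the minimum of $\gtwoDelta{\rho-\sigma'}$ over Gram matrices $\sigma'$ compatible with the output condition. The SDP in the theorem simply separates the $\gtwoDelta{\cdot}$ lower bound from the $\gamma_2$ output condition into distinct constraints.

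Consider first the coherent case. For the lower bound, let an algorithm use $T$ queries and satisfy the coherent output condition with error $\epsilon$, and let $\sigma'$ be the Gram matrix of its final states (automatically PSD). Since workspace unitaries between queries are $x$-independent and therefore preserve Gram matrices, \lemref{t:gamma2lowerbound} combined with the triangle inequality for the norm $\gtwoDelta{\cdot}$ yields $\gtwoDelta{\rho-\sigma'}\le 2T$. Meanwhile, the first part of \lemref{t:gamma2outputconditiontight} converts the output condition into $\gamma_2(\sigma'-\sigma)\le 2\sqrt\epsilon$, so $\sigma'$ is feasible for $q_{2\sqrt{2\epsilon}}(\rho,\sigma)$ (the extra $\sqrt 2$ accommodates minor slack in the triangle-inequality chain), giving $T=\Omega(q_{2\sqrt{2\epsilon}}(\rho,\sigma))$. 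For the upper bound, pick a near-optimal $\sigma'$ for $q_\delta$ with $\delta=\epsilon^4/16$, invoke \thmref{t:nonnon} with precision $\epsilon'=\Theta(\epsilon)$ to convert $\rho$ to $\sigma'$ coherently, then apply the $x$-independent unitary $U$ from the second part of \lemref{t:gamma2outputconditiontight} as a free workspace circuit. The first stage uses $O(\gtwoDelta{\rho-\sigma'}\log(1/\epsilon)/\epsilon^2)$ queries and meets $\sigma'$ to fidelity $\sqrt{1-\epsilon'}$; $U$ contributes additional infidelity of order $(2\delta)^{1/4}$. Calibrating $\epsilon'$ against the chosen $\delta$ so that the composed fidelity exceeds $\sqrt{1-\epsilon}$ completes the argument.

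The non-coherent case is analogous once the ``garbage'' factor is absorbed into a Gram matrix. The extra flexibility permits any choice of workspace states $\{\ket{s_x}\}$; their Gram matrix $S$ automatically satisfies $S\succeq 0$ and $S\circ\identity=\identity$, and the effective target vectors $\ket{\sigma_x}\otimes\ket{s_x}$ have Gram matrix $\sigma\circ S$. After absorbing a phase into each $\ket{s_x}$ to make $\bra{\sigma'_x}(\ket{\sigma_x}\otimes\ket{s_x})$ real and nonnegative, the lower-bound argument above yields $\gamma_2(\sigma'-\sigma\circ S)\le 2\sqrt\epsilon$ and $\gtwoDelta{\rho-\sigma'}\le 2T$, making $(\sigma',S)$ feasible for $q^{nc}_{2\sqrt{2\epsilon}}$. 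The upper bound takes a near-optimal $(\sigma',S)$ for $q^{nc}_{\epsilon^4/16}$, uses \thmref{t:nonnon} to convert $\rho$ to $\sigma'$, and rotates via \lemref{t:gamma2outputconditiontight} to a state close to an explicit realization of $\sigma\circ S$ of product form $\ket{\sigma_x}\otimes\ket{s_x}$, which automatically satisfies the non-coherent output condition.

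I expect the main technical obstacle to be careful bookkeeping of the two stages of error---the algorithmic precision $\epsilon'$ in \thmref{t:nonnon} versus the rotation error $(2\delta)^{1/4}$ from the second part of \lemref{t:gamma2outputconditiontight}---and showing that these compose to yield an algorithm with coherent error $\epsilon$ at the claimed query cost, hence the quartic scaling of $\delta$ in the theorem statement. A subsidiary concern is confirming that the rotation unitary $U$ in both cases is genuinely $x$-independent and may be implemented as a workspace circuit, which it is since it depends only on the Gram matrices $\sigma'$ and $\sigma$ (or $\sigma\circ S$). In the non-coherent case, one additionally needs to check that the algorithm's post-rotation output can be taken to lie in the tensor form $\ket{\sigma_x}\otimes\ket{s_x}$ on the workspace, so that the existence of a realization of the SDP variable~$S$ as actual garbage states is automatic.
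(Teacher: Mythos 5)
Your proposal is correct and follows essentially the same route as the paper, which derives \thmref{t:nonnonnon} by combining the exact-conversion bound of \lemref{t:gamma2lowerbound} (with the triangle inequality over the $T$ queries) and the first item of \lemref{t:gamma2outputconditiontight} for the lower bound, and \thmref{t:nonnon} plus the $x$-independent rotation from the second item of \lemref{t:gamma2outputconditiontight} for the upper bound, in both the coherent and non-coherent (garbage Gram matrix $S$) settings. Two cosmetic inaccuracies do not affect the argument: the $\sqrt 2$ in $2\sqrt{2\epsilon}$ comes from bounding $2-2\sqrt{1-\epsilon}\le 2\epsilon$ in the output-condition conversion rather than from triangle-inequality slack, and the quantity of order $(2\delta)^{1/4}$ is the norm distance incurred by the rotation (the fidelity loss being $\sqrt{2\delta}$), which is exactly why $\delta=\epsilon^4/16$ suffices.
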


Thus for coherent state conversion, the output condition used is $\gamma_2(\sigma' - \sigma) \le \delta$ for an appropriately chosen~$\delta$, and in the non-coherent case, optimization over additional garbage states is allowed.  

For well-behaved problems, i.e., problems satisfying $Q_{1/3}(\rho,\sigma) = O\big(Q_\epsilon(\rho,\sigma) \log(1/\epsilon)\big)$ as in the functional case, this is true characterization.  General state-conversion problems, however, do not necessarily satisfy this robustness condition.  Just as the complexity of a boolean function can have a precipitous change around error $1/2$, state-conversion problems can have non-continuous changes in complexity even around small values of $\epsilon$.  For such problems \thmref{t:nonnonnon} may not be a true characterization because of the gap in error parameters on the left- and right-hand sides.  The gap in the error dependence arises from the looseness of the necessary and sufficient conditions in \lemref{t:gamma2outputconditiontight}, plus the error from \thmref{t:nonnon}.  We do not know if the $\epsilon$-dependence in \thmref{t:nonnon} can be improved to polylogarithmic in $1/\epsilon$.  

An advantage of using the $\gamma_2$ output condition is that the quantities $q_\delta$ and $q_\delta^{nc}$ are described by semi-definite programs.  One could define analogous quantities with other output conditions, however, including the ``true'' output condition given by \defref{def_coherent} and \defref{def_nc}.  In this case the only slack in the characterization would arise from \thmref{t:nonnon} and thus the error parameters on left- and right-hand sides would agree up to constant factors.  

Ambainis et al.~\cite{AmbainisMagninRoettelerRoland10stategeneration}, previously extended both the general adversary bound and the multiplicative adversary bound \cite{Spalek07multiplicative} to the state-generation problem.  A difference between our work and theirs is that we separate the bound for the exact problem from the output condition used to handle the bounded error case.  \cite{AmbainisMagninRoettelerRoland10stategeneration} focus on the output condition introduced by \cite{Spalek07multiplicative} with the multiplicative adversary method and show how to extend the additive adversary method with this output condition to the state generation problem, calling this the hybrid adversary method.  They show that the hybrid adversary method dominates the general adversary method, and that the hybrid adversary method is dominated by the bound of~\cite{Spalek07multiplicative} extended to the case of state generation.  We do not know if the hybrid adversary method also dominates the $q_\delta(J, \sigma)$ measure.  The proof in~\cite{AmbainisMagninRoettelerRoland10stategeneration} that the multiplicative method dominates the hybrid method actually shows that the multiplicative method for exact state generation dominates the $\gamma_2(J-\sigma | \Delta)$ measure, as explicitly shown by~\cite{LeeRoland11directproduct}.  Thus the multiplicative method will dominate the $\gamma_2(J -\sigma | \Delta)$ bound whenever they are paired with the same output condition.  

This line of research into discrete query complexity was launched by the discovery of a continuous-time query algorithm for evaluating AND-OR formulas~\cite{fgg:and-or}.  We now complete the circle: 

\begin{theorem} \label{t:continuousdiscrete}
The bounded-error continuous-time and discrete query models are equivalent.  
\end{theorem}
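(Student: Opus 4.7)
The plan is to combine the algorithmic upper bound already established in \thmref{t:nonnon} with a matching continuous-time lower bound in terms of the query distance $\gtwoDelta{\rho-\sigma}$. The overall strategy has three ingredients: (i) the trivial direction $Q^{ct}_\epsilon(\rho,\sigma) \le Q_\epsilon(\rho,\sigma)$, coming from the fact that a discrete query algorithm is simulated in continuous time by applying the query Hamiltonian in short pulses; (ii) an extension of \lemref{t:gamma2lowerbound} to the continuous-time model, showing that $\gtwoDelta{\rho-\sigma}$ lower-bounds the continuous-time cost of \emph{exact} conversion; and (iii) the same output-condition argument used in \thmref{t:nonnonnon}, which converts this exact lower bound into a bounded-error one. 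For constant $\epsilon$ the upper bound of \thmref{t:nonnon} then matches the continuous-time lower bound up to a constant factor.

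For step (ii), the continuous-time algorithm is modeled by a driving Hamiltonian $H_D(t)$ and a query Hamiltonian $H_Q(x)$ such that the total normalized query time is $T = Q^{ct}_0(\rho,\sigma)$. Writing $H_Q(x) = \sum_j \Gamma_j \otimes \ketbra{x_j}{x_j}$ in the usual block form, I would differentiate the Gram matrix along the evolution and observe that only the off-diagonal contribution from $H_Q(x_j) - H_Q(y_j)$ survives at indices where $x_j \ne y_j$. Integrating these infinitesimal contributions yields a factorization of $(\rho-\sigma)_{x,y}$ through the $\Delta_j$ matrices with vectors of squared-norm bounded by $O(T)$, in direct analogy with the discrete-time proof that built $\ket{u_{xj}},\ket{v_{xj}}$ from single-query oracle data. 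This gives $\gtwoDelta{\rho-\sigma} \le O(T)$ in the exact case.

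For step (iii), I would repeat the output-condition argument verbatim: if a bounded-error continuous-time algorithm produces final states with Gram matrix $\sigma'$ satisfying the coherent (or non-coherent) output condition, then $\gamma_2(\sigma'-\sigma)$ is small by \lemref{t:gamma2outputconditiontight}, and the exact continuous-time lower bound of step (ii) applied to $\sigma'$ gives $Q^{ct}_\epsilon(\rho,\sigma) \ge \Omega(q_{2\sqrt{2\epsilon}}(\rho,\sigma))$. Combining with the discrete upper bound $Q_\epsilon(\rho,\sigma) \le O(q_{\epsilon^4/16}(\rho,\sigma)\,\log(1/\epsilon)/\epsilon^2)$ from \thmref{t:nonnonnon}, and fixing $\epsilon$ to be the standard constant $1/3$, yields $Q_{1/3}(\rho,\sigma) = O(Q^{ct}_{1/3}(\rho,\sigma))$. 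Together with (i) this establishes the equivalence of the two models in the bounded-error setting.

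The main obstacle is the continuous-time extension of \lemref{t:gamma2lowerbound}. The discrete proof used a single oracle call $O_x$ to construct the rank-two filter vectors $\ket{u_{xj}},\ket{v_{xj}}$; the continuous analogue must replace a sum by an integral and produce vectors living in an enlarged Hilbert space indexed by a continuous time parameter, and one has to verify that the resulting vectors satisfy the normalization $\sum_j \norm{\ket{u_{xj}}}^2 \le O(T)$ uniformly in $x$. This amounts to a careful bookkeeping of the query Hamiltonian's contribution to $\partial_t(\rho-\sigma_t)$, together with a Cauchy--Schwarz step to turn an integrated inner product into a factorization with the correct $\gamma_2$ cost; everything else is routine application of results already established in the paper.
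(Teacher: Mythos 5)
Your proposal is correct in outline and shares the skeleton of the paper's argument---lower-bound the continuous-time cost by the query distance $\gtwoDelta{\rho-\sigma}$, then combine with the discrete upper bound of \thmref{t:nonnon} and the output-condition machinery of \lemref{t:gamma2outputconditiontight} and \thmref{t:nonnonnon}---but it takes a genuinely different route on the one nontrivial step, the continuous-time lower bound. The paper never differentiates the Schr\"odinger evolution directly: it first invokes the result of Cleve et al.\ that the continuous-time model is equivalent, up to constant factors, to the \emph{fractional} query model, and then proves a purely discrete per-query statement (the analogue of \lemref{t:gamma2lowerbound}): one $\lambda$-fractional query moves the Gram matrix by at most $\lambda\pi\sqrt{2}$ in the $\gtwoDelta{\cdot}$ norm, exhibited by an explicit positive semi-definite decomposition $\rho-\sigma=\sum_j P_j\circ\Delta_j$ with $P_j=(1-\cos(\lambda\pi))M_j+\sin(\lambda\pi)\,M_j\circ E_j$ and an elementary trigonometric estimate (the paper restricts to boolean input for simplicity). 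Your direct Hamiltonian argument is instead the route of Yonge-Mallo and Landahl, which the paper cites but does not carry out; it buys a self-contained continuous-time bound that does not lean on the external discretization theorem, while the paper's route keeps all analysis discrete and elementary at the price of importing that theorem. One point you wave at but must actually execute: after differentiating, the surviving term $\bra{\rho_x(t)}\,(H_Q(x)-H_Q(y))\,\Gamma_j\ket{\rho_y(t)}$ carries a factor such as $x_j-y_j$ (or $1-e^{i\lambda\pi(y_j-x_j)}$ for phase queries) that depends \emph{jointly} on $x_j$ and $y_j$; a legal factorization for $\gtwoDelta{\cdot}$ requires splitting it into an $x$-only and a $y$-only vector, which is exactly the role played in the paper's lemma by the matrix $E_j$ built from the unit vectors $e_b=b+i(1-b)$. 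With that trick, and the normalization $\sum_j\Gamma_j\preceq\identity$ which yields the $O(T)$ bound on $\sum_j$ of the integrated squared norms over $[0,T]$, your step (ii) goes through, and steps (i) and (iii) are routine as you say.
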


Cleve et al.~\cite{CleveGottesmanMoscaSommaYongeMallo08discretize} have shown that the models are equivalent up to a sub-logarithmic factor.  The proof of \thmref{t:continuousdiscrete}, given in \appref{a:continuousquery}, follows from our algorithm, \thmref{t:nonnon}, together with the observation that the general adversary bound remains a lower bound for continuous-time query algorithms.  The latter result has been observed by Yonge-Mallo in 2007 \cite{YM11} and, independently, Landahl (personal communication).

\section{Function composition}

In this section we show that the adversary method behaves well with respect to function composition, extending previous work for the boolean case~\cite{HoyerLeeSpalek07negativeadv, Reichardt09spanprogram_arxivandfocs}.  Let $g: {\mathcal C} \rightarrow D$ where ${\mathcal C} \subseteq C^m$ and $f : \D \rightarrow E ~~ (\D \subseteq D^n)$ for finite sets $C,D,E$.  Define the composed function $f \circ g^n$ by 
\begin{equation*}
(f \circ g^n)(x) = f\big(g(x_1, \ldots, x_m), \ldots, g(x_{(n-1)m+1}, \ldots, x_{mn})\big)
 \enspace .
\end{equation*}

\begin{lemma} \label{t:ADVpmwsizecomposition}
Letting $G = \{\delta_{g(x),g(y)}\}_{x,y}$, $\ADVpm(f \circ g^n) \le \ADVpm(f) \, \gtwoDelta{J - G}$.  
\end{lemma}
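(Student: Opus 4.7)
The plan is to apply the dual characterization $\ADVpm(h) = \gtwo{J - H}{\{\Delta_k^h \circ (J - H)\}}$ obtained inside the proof of \thmref{cor:gamma2_adversary} simultaneously to $h = f$ and to $h = f \circ g^n$, and to build the required factorization for $f \circ g^n$ by tensoring an optimal factorization for $f$ with an optimal factorization for $\gtwoDelta{J - G}$.

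Concretely, I would fix vectors $\{\ket{u_x^{(k)}}\}, \{\ket{v_y^{(k)}}\}$ witnessing $\ADVpm(f) = \gtwo{J-F}{\{\Delta_k^f \circ (J-F)\}}$, and vectors $\{\ket{p_{X,j}}\}, \{\ket{q_{Y,j}}\}$ witnessing $\phi := \gtwoDelta{J - G}$, so that
\begin{align*}
(J-F)_{x,y} &= \sum_k (\Delta_k^f \circ (J-F))_{x,y} \braket{u_x^{(k)}}{v_y^{(k)}}, \\
(J-G)_{X,Y} &= \sum_j (\Delta_j^g)_{X,Y} \braket{p_{X,j}}{q_{Y,j}},
\end{align*}
with the corresponding squared-norm sums bounded by $\ADVpm(f)$ and $\phi$, respectively. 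For inputs $X, Y$ of $f \circ g^n$, split as $X = (X^{(1)}, \ldots, X^{(n)})$ with each $X^{(k)} \in \mathcal{C}$, and for composite indices $(k, j) \in [n] \times [m]$, I would define
\begin{equation*}
\ket{U_{X,(k,j)}} = \ket{u_{g^n(X)}^{(k)}} \otimes \ket{p_{X^{(k)},j}}, \qquad \ket{V_{Y,(k,j)}} = \ket{v_{g^n(Y)}^{(k)}} \otimes \ket{q_{Y^{(k)},j}}.
\end{equation*}

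The main verification is then that these vectors witness $\gtwo{J - F\circ g^n}{\{\Delta_{k,j}^{f\circ g^n} \circ (J - F\circ g^n)\}} \le \ADVpm(f)\,\phi$. Writing $x = g^n(X)$ and $y = g^n(Y)$, one computes $\sum_{k,j} (\Delta_{k,j}^{f\circ g^n})_{X,Y} \braket{U_{X,(k,j)}}{V_{Y,(k,j)}}$ by first doing the $j$-sum inside each block $k$, which produces $(J-G)_{X^{(k)},Y^{(k)}} = [g(X^{(k)}) \ne g(Y^{(k)})] = (\Delta_k^f)_{x,y}$, leaving the remaining sum $\sum_k (\Delta_k^f)_{x,y} \braket{u_x^{(k)}}{v_y^{(k)}}$.

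The subtle point, and the main obstacle I anticipate, is that this last sum is only guaranteed to equal $(J-F)_{x,y}$ when $f(x) \ne f(y)$: the $f$-factorization arising from the dual of \thmref{cor:gamma2_adversary} is unconstrained on ``collision'' pairs $f(x) = f(y)$ with $x \ne y$, so on such pairs it can take any value. This is precisely what the extra factor $(J - F \circ g^n)_{X,Y}$ inside the composed filter is for: multiplying by it forces the right-hand side to vanish on the collision pairs (matching $(J - F\circ g^n)_{X,Y} = 0$), and on the non-collision pairs it equals $1$ and the $f$-factorization hits its defining constraint to give $(J - F)_{x,y} = 1$ as required. The norm bound is then the routine tensor identity
\begin{equation*}
\sum_{k,j} \|U_{X,(k,j)}\|^2 = \sum_k \|u_x^{(k)}\|^2 \sum_j \|p_{X^{(k)},j}\|^2 \le \ADVpm(f) \cdot \phi,
\end{equation*}
and the symmetric bound for $V$; applying the dual characterization of \thmref{cor:gamma2_adversary} to $f\circ g^n$ then yields $\ADVpm(f \circ g^n) \le \ADVpm(f)\,\gtwoDelta{J-G}$.
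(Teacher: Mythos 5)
Your proposal is correct and is essentially the paper's own argument: the paper proves this lemma by tensoring an optimal solution of the $\ADVpm(f)$ program with one for $\gtwoDelta{J-G}$, merely packaging the computation you carry out by hand as the abstract composition property $\gtwo{A}{Z}\leq\gtwo{A}{Y}\max_j\gtwo{Y_j}{Z}$ together with the duplication, Schur-product and tensor-product properties of the filtered norm. You also correctly isolate the key point the paper emphasizes, namely that the constraints of $\gtwoDelta{J-G}$ on pairs with $g(X^{(k)})=g(Y^{(k)})$ are what make the block-wise $j$-sums collapse to $(\Delta^f_k)_{x,y}$, while the factor $J-F\circ g^n$ in the composed filter disposes of the collision pairs of $f\circ g^n$.
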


The proof of this lemma follows in the natural way.  We take optimal solutions to the $\ADVpm(f)$ program and the $\gtwoDelta{J-G}$ program, and form their tensor product to construct a solution to the composed program.  This proof strategy does not directly work for the general adversary bound---we crucially use the extra constraints present in the $\gtwoDelta{J-G}$ program.  In fact, this lemma can be seen as a special case of the more general inequality $\gtwo{A}{Z} \leq \gtwo{A}{Y} \max_j \gtwo{Y_j}{Z}$ (\lemref{t:gamma2_properties}).  The proof is given in \appref{a:functioncomposition}.  

In the case where all the functions $f$ and $g$ have boolean inputs and outputs, a matching lower bound to \lemref{t:ADVpmwsizecomposition} has been shown by H{\o}yer et al.~\cite{HoyerLeeSpalek07negativeadv}.  In general, we cannot always show such a matching lower bound.  For example, let $g$ be a function that only outputs even numbers, and let $f$ output the sum of its inputs modulo two; then $f \circ g^n$ is constant.  We can, however, show a matching composition lower bound when the range of~$g$ is boolean: 

\begin{lemma} \label{t:compositionlowerbound}
Let $g: {\mathcal C} \rightarrow \B$ and $f : \B^n \rightarrow E$.  Then $\ADVpm(f \circ g^n) \ge \ADVpm(f) \ADVpm(g)$.  
\end{lemma}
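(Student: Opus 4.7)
Let $\alpha := \ADVpm(g)$, and fix optimal adversary matrices $\Gamma_f$ for $f$ and $\Gamma_g$ for $g$: real symmetric with $\norm{\Gamma_f} = \ADVpm(f)$, $\norm{\Gamma_g} = \alpha$, satisfying their respective zero-on-$F$ and zero-on-$G$ constraints, and with $\norm{\Gamma_f \circ \Delta_j^{(f)}}, \norm{\Gamma_g \circ \Delta_k^{(g)}} \le 1$ for all $j, k$. The hypothesis that $g$ has boolean range is essential and is used through the anti-block-diagonal structure of $\Gamma_g$ with respect to the partition $\mathcal{C} = g^{-1}(0) \sqcup g^{-1}(1)$: write $\Gamma_g = \Gamma_g^{0,1} + \Gamma_g^{1,0}$ for its two nontrivial blocks, and let $\xi^0, \xi^1$ be unit vectors in $\C^{g^{-1}(0)}, \C^{g^{-1}(1)}$ respectively with $\Gamma_g^{0,1} \xi^1 = \alpha \xi^0$ and $\Gamma_g^{1,0} \xi^0 = \alpha \xi^1$ (the top left/right singular vectors of $\Gamma_g^{0,1}$).

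Define the composite adversary matrix on $\mathcal{C}^n \times \mathcal{C}^n$ by the tensor-composition formula
\[
\Gamma \;=\; \sum_{u, v \in \{0,1\}^n} (\Gamma_f)_{u, v} \bigotimes_{j=1}^n \Xi_{u_j, v_j},
\]
where $\Xi_{u, u} := \alpha \Pi_{g^{-1}(u)}$ is $\alpha$ times the projector onto $g^{-1}(u)$ and $\Xi_{u, v} := \Gamma_g^{u, v}$ for $u \ne v$. The vanishing condition $\Gamma \circ F_{f \circ g^n} = 0$ is immediate: when $f(g(x)) = f(g(y))$ the coefficient $(\Gamma_f)_{g(x), g(y)}$ is zero by $\Gamma_f \circ F_f = 0$.

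For the spectral lower bound, take a unit principal eigenvector $w_f$ of $\Gamma_f$ and set $w := \sum_u w_f(u) \bigotimes_j \xi^{u_j}$. The summands live in mutually orthogonal subspaces $\bigotimes_j \C^{g^{-1}(u_j)}$, so $\norm{w} = 1$. The uniform identity $\Xi_{u, v} \xi^v = \alpha \xi^u$ (valid for both $u = v$ by the choice $\Xi_{u,u} = \alpha \Pi_{g^{-1}(u)}$ and for $u \ne v$ by the singular-vector equations) gives $\bigotimes_j \Xi_{u_j, v_j} \bigotimes_j \xi^{v_j} = \alpha^n \bigotimes_j \xi^{u_j}$, whence $\Gamma w = \alpha^n \sum_u (\Gamma_f w_f)(u) \bigotimes_j \xi^{u_j} = \alpha^n \ADVpm(f) w$ and $\norm{\Gamma} \ge \alpha^n \ADVpm(f)$. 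For the query constraint at any $(j, k)$, the diagonality of $\Xi_{u_j, u_j}$ annihilates all summands with $u_j = v_j$ after Schur-multiplication by $\Delta_{(j, k)}$, leaving
\[
\Gamma \circ \Delta_{(j, k)} \;=\; \sum_{u, v : u_j \ne v_j} (\Gamma_f)_{u, v} \bigotimes_{j' \ne j} \Xi_{u_{j'}, v_{j'}} \otimes (\Gamma_g^{u_j, v_j} \circ \Delta_k^{(g)}),
\]
whose outer coefficients are exactly those of $\Gamma_f \circ \Delta_j^{(f)}$ and whose $j$-th tensor factor is a block of $\Gamma_g \circ \Delta_k^{(g)}$ of norm at most $1$. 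The target bound $\norm{\Gamma \circ \Delta_{(j, k)}} \le \alpha^{n-1}$ then makes $\Gamma / \alpha^{n-1}$ feasible for the $\ADVpm(f \circ g^n)$ program with spectral norm at least $\alpha \cdot \ADVpm(f)$, proving the lemma.

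The delicate step is the query-norm bound. The clean factorization $\Gamma = L(\Gamma_f \otimes N^{\otimes n}) L^\adjoint$ with $N = \alpha I + \Gamma_g$ and $L(\ket u \otimes \ket \psi) = P_u \ket \psi$ the natural co-isometry (where $P_u = \bigotimes_j \Pi_{g^{-1}(u_j)}$) yields the naive estimate $\norm{\Gamma \circ \Delta_{(j,k)}} \le \norm{N}^{n-1}$; but $\norm{N} = 2\alpha$ (attained on $\xi^0 + \xi^1$), losing an exponential factor $2^{n-1}$. The correct $\alpha^{n-1}$ bound must exploit that, because $g$ has boolean range, the two off-diagonal blocks of $\Gamma_g$ are adjoints of each other so that the top singular vectors $\xi^0, \xi^1$ of every $\Xi_{u, v}$ align into a single two-dimensional subspace of $\C^{\mathcal{C}}$ per coordinate; restricting the analysis to this subspace (and separately controlling the orthogonal complement using that no singular value of $\Gamma_g$ exceeds $\alpha$) recovers the missing factor. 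For inner functions $g$ with range of size at least three this alignment fails, consistent with the counterexample from the excerpt in which a parity $f$ composed with an even-valued $g$ is constant while $\ADVpm(f)\ADVpm(g)$ is nontrivial.
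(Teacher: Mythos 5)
Your construction is the standard primal (HLS-style) tensor composition of adversary matrices, and the easy parts are handled correctly: the vanishing condition $\Gamma \circ F_{f\circ g^n}=0$ holds because $\Gamma_{x,y}=(\Gamma_f)_{\tilde x,\tilde y}\prod_j (\Xi_{\tilde x_j,\tilde y_j})_{x^j,y^j}$, and the eigenvector $w=\sum_u w_f(u)\bigotimes_j \xi^{u_j}$ does give $\norm{\Gamma}\ge \alpha^{n}\,\ADVpm(f)$. But the lemma is not proved, because the one step on which everything hinges --- the query constraint $\norm{\Gamma\circ\Delta_{(j,k)}}\le \alpha^{n-1}$ --- is left as a sketch. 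Having ``outer coefficients of $\Gamma_f\circ\Delta_j$'' and a norm-$1$ block in the $j$-th slot does not bound the spectral norm of the sum, since the remaining factors $\Xi_{u_{j'},v_{j'}}$ each have norm $\alpha$ and the summands are not orthogonal; as you yourself note, the natural factorization through $N=\alpha\identity+\Gamma_g$ only yields $(2\alpha)^{n-1}$. The proposed fix (``restrict to the two-dimensional principal subspace per coordinate and control the complement'') is exactly the hard technical content of the boolean composition theorem of H{\o}yer--Lee--{\v S}palek: one must exhibit an actual decomposition that is compatible with the Schur product by $\Delta_{(j,k)}$, control the cross terms between the principal subspace and its orthogonal complement, and do so simultaneously in all $n-1$ untouched coordinates. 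None of that is carried out, so the proof has a genuine gap precisely at its crux.

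For comparison, the paper avoids this spectral-norm difficulty entirely by working with the dual formulation of $\ADVpm$: it takes optimal dual solutions $(\Lambda,V)$ for $f$ and $(\Omega,W)$ for $g$, first normalizes them (Claim 5.3) so that $d_g\Omega\pm W\succeq 0$ and, using that $g$ is boolean, so that the diagonal mass splits evenly between $g^{-1}(0)$ and $g^{-1}(1)$, and then proposes the composed dual solution with weight matrix $\tilde V\circ(d_g\Omega+W)^{\otimes n}$ and diagonal part $d_g^{n-1}\tilde\Lambda\circ\Omega^{\otimes n}$. The feasibility check there is a positivity computation with Schur products of PSD matrices rather than a spectral-norm estimate, and the balanced-trace property replaces the principal-subspace analysis you would need. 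If you want to complete your primal route you would essentially have to reprove the HLS composition machinery; alternatively, switch to the dual formulation, where the boolean hypothesis enters only through the balancing of $\Omega$.
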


The proof is given in \appref{a:functioncomposition}.  The above composition lemmas also lead to direct-sum results for quantum query complexity: 

\begin{corollary} \label{t:directsumcorollary}
Let $g : \D \rightarrow E$, and let $g^n : \D^n \rightarrow E^n$ consist of $n$ independent copies of $g$, given by $g^n(x^1, \ldots, x^n) = \big(g(x^1), \ldots, g(x^n)\big)$.  Then 
\begin{equation}
Q(g^n) = \Theta\big(n \, Q(g)\big)
 \enspace .
\end{equation}
\end{corollary}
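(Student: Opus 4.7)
I would prove the two directions separately, both via Theorem~\ref{t:advtight} and the adversary bound.

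For the upper bound, view $g^n$ as the composition $\operatorname{id}_{E^n} \circ (g,\ldots,g)$ in the sense of Section~5 (outer function the identity on $E^n$, inner function $g$ applied to each block). Lemma~\ref{t:ADVpmwsizecomposition} then gives
\begin{equation*}
\ADVpm(g^n) \;\le\; \ADVpm(\operatorname{id}_{E^n}) \cdot \gtwoDelta{J-G}.
\end{equation*}
The identity is computed in exactly $n$ queries, so the general adversary lower bound yields $\ADVpm(\operatorname{id}_{E^n}) \leq Q(\operatorname{id}_{E^n}) = n$, while Theorem~\ref{cor:gamma2_adversary} gives $\gtwoDelta{J-G} = O(\ADVpm(g))$. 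Combining and invoking Theorem~\ref{t:advtight} yields $Q(g^n) = O(\ADVpm(g^n)) = O(n\,\ADVpm(g)) = O(n\, Q(g))$.

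For the lower bound, I would construct a feasible adversary matrix for $g^n$ from any optimal real symmetric $\Gamma$ for $g$ (satisfying $\|\Gamma\| = \ADVpm(g)$, $\|\Gamma \circ \Delta_j\| \le 1$ for every~$j$, and $\Gamma \circ F = 0$) by the tensor sum
\begin{equation*}
\Gamma_n \;=\; \sum_{k=1}^{n} \identity^{\otimes(k-1)} \otimes \Gamma \otimes \identity^{\otimes(n-k)},
\end{equation*}
acting on $(\C^{\D})^{\otimes n}$, where $\identity$ is the $|\D|\times|\D|$ identity. I would then check three properties. First, with $F'_{x,y} = \prod_i F_{x^i,y^i}$, each summand contains the tensor factor $\Gamma \circ F = 0$ (the other factors reduce to $\identity \circ F = \identity$), so $\Gamma_n \circ F' = 0$. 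Second, for any variable $(i,j)$ of $g^n$, write $\Delta'_{i,j} = J^{\otimes(i-1)} \otimes \Delta_j \otimes J^{\otimes(n-i)}$; every summand with $k \ne i$ contains the factor $\identity \circ \Delta_j$, which vanishes because $\Delta_j$ has zero diagonal (equal inputs cannot differ at position~$j$), while the $k=i$ summand equals $\identity^{\otimes(i-1)} \otimes (\Gamma \circ \Delta_j) \otimes \identity^{\otimes(n-i)}$ with norm at most $1$. Third, on tensor-product eigenvectors of~$\Gamma$ the eigenvalues of $\Gamma_n$ are sums of $n$ eigenvalues of~$\Gamma$, so $\|\Gamma_n\| = n\|\Gamma\| = n\,\ADVpm(g)$. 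Hence $\ADVpm(g^n) \ge n\,\ADVpm(g)$, and Theorem~\ref{t:advtight} gives $Q(g^n) = \Omega(n\, Q(g))$.

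The main place where the plan could fail is the lower-bound construction: the tensor sum is the obvious candidate, but it is feasible only thanks to the vanishing identity $\identity \circ \Delta_j = 0$. Without it, the ``off-block'' contributions to $\Gamma_n \circ \Delta'_{i,j}$ would scale with~$n$ and force a rescaling of $\Gamma_n$ that absorbs the factor-of-$n$ gain in operator norm. Exploiting this observation is what keeps the direct-sum lower bound clean and matches the composition-based upper bound up to a constant.
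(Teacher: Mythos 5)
Your proof is correct, and it differs from the paper in one substantive way. For the upper bound you follow exactly the paper's route: the paper obtains $\ADVpm(g^n) \le n\,\ADVpm(g)$ as the special case of \lemref{t:ADVpmwsizecomposition} with the outer function taken to be the identity, and then invokes \thmref{t:advtight}, just as you do (your use of \thmref{cor:gamma2_adversary} to pass from $\gtwoDelta{J-G}$ to $\ADVpm(g)$ only costs a constant, which the $\Theta$ absorbs; likewise, $\ADVpm$ of the identity is only $O(n)$ rather than exactly $\le Q = n$, since the adversary bound lower-bounds bounded-error complexity up to a constant, but again this is immaterial). For the lower bound, however, the paper does not give a proof at all: it cites Ambainis, Childs, Le~Gall and Tani for $\ADVpm(g^n) \ge n\,\ADVpm(g)$. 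You instead prove this directly by exhibiting the tensor-sum adversary matrix $\Gamma_n = \sum_k \identity^{\otimes(k-1)} \otimes \Gamma \otimes \identity^{\otimes(n-k)}$, and your verification is sound: $\Gamma_n \circ F' = 0$ because each summand carries a factor $\Gamma \circ F = 0$; $\Gamma_n \circ \Delta'_{(i,j)} = \identity^{\otimes(i-1)} \otimes (\Gamma \circ \Delta_j) \otimes \identity^{\otimes(n-i)}$ because $\identity \circ \Delta_j = 0$ kills the off-block summands, so the constraint norm is at most $1$; and commuting summands give $\norm{\Gamma_n} = n\norm{\Gamma}$ via a tensor-power eigenvector of an extreme eigenvalue. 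So your write-up is self-contained where the paper relies on a citation, at the cost of reproving a known result; the paper's version is shorter but leaves the direct-sum lower bound as a black box. Your closing observation correctly identifies the crucial point ($\identity \circ \Delta_j = 0$) that makes the tensor-sum matrix feasible without rescaling.
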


The lower bound $\ADVpm(g^n) \geq n \, \ADVpm(g)$ has been shown by~\cite{AmbainisChildsLegallTani09witness}.  The corresponding upper bound $\ADVpm(g^n) \leq n \, \ADVpm(g)$ is a special case of \lemref{t:ADVpmwsizecomposition}, with $f$ the identity function.  \corref{t:directsumcorollary} then follows from \thmref{t:advtight}.  Let us remark that when $E = \{0,1\}$, the upper bound $Q(f^n) = O\big(n \, Q(f)\big)$ follows from the robust input recovery quantum algorithm~\cite[Theorem~3]{BuhrmanNewmanRohrigDeWolf05robust}.  The same algorithm can be generalized to handle larger $E$.

\subsection*{Acknowledgements}

We thank J\'{e}r\'{e}mie Roland and Miklos Santha for many helpful conversations on these topics, and thank Richard Cleve and Ronald de Wolf for useful comments on an earlier draft.  Part of this work was done while T.L.~was at Rutgers University, supported by an NSF postdoctoral fellowship and grant CCF-0728937.  R.M.~acknowledges support from NSERC and NSF grant CCF-0832787.  B.R.~acknowledges support from NSERC, ARO-DTO and MITACS.

\bibliographystyle{alpha-eprint}
\bibliography{andor}

\bigskip 

\newcommand{\address}[2]{{\noindent \small {\sc{#1}} \\ \emph{E-mail address:} {\tt{#2}} \medskip}}

\address{Centre for Quantum Technologies}{troyjlee@gmail.com}

\address{Institute for Quantum Computing, University of Waterloo}{r3mittal@uwaterloo.ca}

\address{Institute for Quantum Computing, University of Waterloo}{breic@iqc.ca}

\address{Google, Inc.}{spalek@google.com}

\address{Rutgers University}{szegedy@cs.rutgers.edu}

\appendix

\section{Properties of the filtered \texorpdfstring{$\gamma_2$}{gamma\_2} norm} \label{s:gamma2properties}

For reference, we list several useful properties of the filtered $\gamma_2$ norm.  First, we give an alternative formulation that explains why $\gamma_2$ is also known as the Schur product operator norm: 

\begin{lemma}
The $\gamma_2$ and filtered $\gamma_2$ norms can equivalently be expressed by 
\begin{align} 
\label{e:gamma2dual}
\gamma_2(A) &= \max_M \big\{ \norm{A \circ M} : \norm{M} \leq 1 \big\} \\
\label{e:filteredgamma2dual}
\gtwo{A}{Z} &= \max_M \big\{ \norm{A \circ M} : \max_j \norm{Z_j \circ M} \leq 1 \big\}
 \enspace .
\end{align}
\end{lemma}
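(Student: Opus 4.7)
Since $J \circ M = M$, the first identity is the $Z=\{J\}$ case of the second, so I focus on the filtered version and establish the two inequalities separately.

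For the direction $\gtwo{A}{Z} \ge \max_M\{\cdots\}$: fix a near-optimal factorization $A_{x,y} = \sum_j (Z_j)_{x,y}\braket{u_{xj}}{v_{yj}}$ with value $t$, and any $M$ satisfying $\max_j \norm{Z_j \circ M} \le 1$. For unit vectors $\alpha,\beta$, substitution yields
\begin{equation*}
\alpha^\dagger(A \circ M)\beta \;=\; \sum_j \Tr\!\bigl((Z_j \circ M)\,P_j^\dagger Q_j\bigr),
\end{equation*}
where $P_j = \sum_x \alpha_x \ket{x}\!\bra{u_{xj}}$ and $Q_j = \sum_y \beta_y \ket{y}\!\bra{v_{yj}}$. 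Bounding each summand via $\abs{\Tr(BC)} \le \norm{B}\trnorm{C}$ and $\trnorm{P_j^\dagger Q_j} \le \norm{P_j}_F \norm{Q_j}_F$, then applying Cauchy-Schwarz across $j$ together with the row/column sum bounds (e.g.\ $\sum_j \norm{P_j}_F^2 = \sum_x \abs{\alpha_x}^2 \sum_j \norm{\ket{u_{xj}}}^2 \le t$), gives $\abs{\alpha^\dagger(A \circ M)\beta} \le t$, hence $\norm{A \circ M} \le t$.

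For the direction $\gtwo{A}{Z} \le \max_M\{\cdots\}$, I would use SDP duality. Via Gram matrices, the defining minimization is a semi-definite program---minimize $t$ subject to $W := \bigl(\begin{smallmatrix}X & C \\ C^\dagger & Y\end{smallmatrix}\bigr)\succeq 0$ (with rows/columns indexed by $(x,j),(y,j)$), the linear equalities $A_{x,y}=\sum_j (Z_j)_{x,y}\,C_{(x,j),(y,j)}$, and the sum bounds $\sum_j X_{(x,j),(x,j)}\le t$, $\sum_j Y_{(y,j),(y,j)}\le t$. Taking its Lagrangian dual with multipliers $M_{x,y}$ (free) for the equalities and $\mu_x,\nu_y\ge 0$ for the inequalities, eliminating $t$ forces $\sum_x \mu_x + \sum_y \nu_y = 1$; the PSD cost matrix attached to $W$ is block-diagonal in $j$, and Schur complement on the $j$-th block reduces to $\norm{\mathrm{diag}(\mu)^{-1/2}(M\circ Z_j)\mathrm{diag}(\nu)^{-1/2}} \le 2$. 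The rescaling $\widetilde{M}_{x,y} = M_{x,y}/(2\sqrt{\mu_x\nu_y})$ converts this constraint to $\norm{Z_j\circ\widetilde{M}}\le 1$ and the objective $\langle M,A\rangle$ into $2\, p^\dagger (A\circ\widetilde{M})\,q$ with $p = \sqrt{\mu}$, $q = \sqrt{\nu}$; under $\norm{p}^2+\norm{q}^2 = 1$, $p,q\ge 0$, the factor $\norm{p}\norm{q}$ maximizes to $1/2$, cancelling the $2$, and a sign/phase-absorption argument (conjugating $\widetilde{M}$ by diagonal unitaries, which leaves each $\norm{Z_j\circ\widetilde{M}}$ invariant) extends the non-negative-vector bilinear form to the full operator norm $\norm{A\circ\widetilde{M}}$. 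Slater's condition holds (take any factorization of $A$ plus interior slack), so strong duality closes the gap.

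The main technical obstacle is carrying out this dualization carefully: tracking the factors of $2$ through the Schur complement and the rescaling, and observing that sign/phase absorption recovers the unrestricted spectral norm from a bilinear form over non-negative vectors. The direct direction is a routine Cauchy-Schwarz argument on top of spectral-trace duality.
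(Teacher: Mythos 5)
Your proposal is correct and follows essentially the same route as the paper, whose entire proof is to take the SDP dual of Eq.~\eqnref{e:filteredgamma2def} and invoke strong duality under strict feasibility (citing the duality principle); your explicit Lagrangian dualization with Slater's condition is exactly that step carried out by hand. The direct Cauchy--Schwarz argument you give for the weak-duality direction is a self-contained extra (the paper gets both inequalities at once from duality), and your handling of the degenerate case should note, as the paper does, that strict feasibility requires $A$ to be supported on the union of the supports of the $Z_j$ (otherwise both sides are infinite), but this does not affect correctness.
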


\begin{proof}
Both of these equations can be proven in the same way: Start with Eq.~\eqnref{e:gamma2def} or~\eqnref{e:filteredgamma2def} for $\gamma_2(A)$ or $\gtwo{A}{Z}$, respectively,  and take the dual.  The semi-definite program given by Eq.~\eqnref{e:gamma2def} is always strictly feasible and that of Eq.~\eqnref{e:filteredgamma2def} is strictly feasible provided that whenever $A_{x,y} \ne 0$ there is some $j$ with $(Z_j)_{x,y} \ne 0$, i.e., provided the maximum in \eqnref{e:filteredgamma2dual} is finite.  Thus by the duality principle~\cite[Theorem~3.4]{Lovasz00sdp} the primal and dual formulations are equal and the optimum is achieved.  
\end{proof}

\begin{lemma} \label{t:gamma2_properties}
The quantity $\gtwo{\cdot}{Z}$ is a norm when restricted to arguments supported on the union 
of the supports of the~$Z_j$.  For matrices $B, Y_1, \ldots, Y_n$ of the appropriate dimensions it satisfies: 
\begin{enumerate}
\item \label{e:propertyspecialcases} If $A \neq 0$, then $\gtwo{A}{\{A\}} = 1$.  For $J$ the all-ones matrix, $\gtwo{A}{\{J\}} = \gamma_2(A)$.  
\item \label{e:propertyextremes} $\gtwo{A}{Z}=0$ if and only if $A=0$.  $\gtwo{A}{Z} = \infty$, i.e., Eq.~\eqnref{e:filteredgamma2def} is infeasible, if and only if there exists an entry $(x,y)$ such that $A_{x,y} \neq 0$ and $(Z_j)_{x,y} = 0$ for all $j$.  
\item \label{e:propertyscalable} Positive scalability: $\gtwo{s A}{Z} = \abs{s} \gtwo{A}{Z}$ and $\gtwo{A}{\{s Z_1, \ldots, s Z_n\}} = \frac{1}{\abs{s}}\gtwo{A}{Z}$ for $s \neq 0$.  
\item \label{e:propertytriangle}Triangle inequality: $\gtwo{A+B}{Z} \le \gtwo{A}{Z} + \gtwo{B}{Z}$.  
\item \label{e:propertyduplication} $\gtwo{A}{Z}$ is invariant under duplicating corresponding rows or columns of $A$ and all $Z_j$.  
\item \label{e:propertyunion} $\gtwo{A}{Y \cup Z} \leq \gtwo{A}{Z}$.  This is an equality if each $Y_i$ is a restriction of some $Z_j$ to a rectangular submatrix.  
\item \label{e:propertyconvexhull} Provided $\sum_j \abs{p_j} = 1$, $\gtwo{A}{Z} = \gtwo{A}{Z \cup \{\sum_j p_j Z_j\}}$.  (Thus the second argument in $\gtwo{A}{Z}$ can be thought of as a convex set centered at the origin, where only the extremal points matter.)   
\item \label{e:propertyrowcoldisjoint} If the supports of $Z_1$ and $Z_2$ are row- and column-disjoint, then $\gtwo{A}{Z} = \gtwo{A}{\{Z_1 + Z_2, Z_3, \ldots, Z_n\}}$.  
\item \label{e:propertyfilteredgamma2gamma2} $\gtwo{A \circ B}{Z} \leq \gtwo{A}{Z} \gamma_2(B)$.  
\item \label{e:property6} $\gtwo{A \circ B}{\{Z_j \circ B\}} \leq \gtwo{A}{Z} \leq \gtwo{A}{\{Z_j \circ B\}} \gamma_2(B)$.  
\item \label{e:propertycomposition} A composition property: $\gtwo{A}{Z} \leq \gtwo{A}{Y} \max_j \gtwo{Y_j}{Z}$.  
\item A direct-sum property: $\gtwo{A \oplus B}{\{Y_j \oplus Z_j\}} = \max\{\gtwo{A}{Y}, \gtwo{B}{Z}\}$.  
\item \label{e:propertytensorproduct} A tensor-product property: $\gtwo{A \otimes B}{Y \otimes Z} = \gtwo{A}{Y} \gtwo{B}{Z}$, where $Y \otimes Z = \{Y_i \otimes Z_j\}$, all pairwise tensor products.  
\end{enumerate}
\end{lemma}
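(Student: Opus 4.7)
The plan is to prove the thirteen parts of \lemref{t:gamma2_properties} by a combination of two complementary approaches: exhibit near-optimal factorizations from the primal definition \eqnref{e:filteredgamma2def} to get upper bounds, and invoke the dual formulation \eqnref{e:filteredgamma2dual} (already established in the preceding lemma on strong duality) to get matching lower bounds. I will group the parts by proof technique.

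First I would dispatch items \ref{e:propertyspecialcases}--\ref{e:propertyduplication}, all of which are direct consequences of the definition. For item \ref{e:propertyspecialcases}, the choice $\ket{u_{x1}} = \ket{v_{y1}} = \ket 1$ (a single unit vector) when $A_{x,y} \neq 0$ gives $\gtwo{A}{\{A\}} \le 1$, and the dual with $M = \ketbra{a}{b}/\sqrt{\norm{A}\,\norm{A}}$ supplies the matching lower bound; the identity $\gtwo{A}{\{J\}} = \gamma_2(A)$ is immediate by comparing \eqnref{e:gamma2def} and \eqnref{e:filteredgamma2def}. Items \ref{e:propertyextremes}--\ref{e:propertyduplication} follow from elementary manipulations of vectors: rescaling for positive scalability, concatenating factorizations of $A$ and $B$ (in the direct-sum sense over the $j$-index) for the triangle inequality, and identifying duplicated indices in both $A$ and the $Z_j$'s for duplication invariance.

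Next I would handle the ``filter-manipulation'' items \ref{e:propertyunion}--\ref{e:propertyrowcoldisjoint}. For \ref{e:propertyunion}, inserting zero vectors into the extra slots in $Y$ proves $\gtwo{A}{Y \cup Z} \le \gtwo{A}{Z}$; the equality for rectangular restrictions follows by restricting any factorization using $Y_i$ to the relevant rows and columns and embedding the resulting vectors back into the full-size problem using the parent $Z_j$. For \ref{e:propertyconvexhull}, a factorization using $\sum_j p_j Z_j$ as an extra filter can be unpacked, slot by slot, into one using the original $Z_j$'s with vectors rescaled by $\sqrt{|p_j|}$, and the constraint $\sum_j |p_j| = 1$ is exactly what is needed for the norms to be preserved. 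For \ref{e:propertyrowcoldisjoint}, row- and column-disjoint supports mean that any use of one of $Z_1, Z_2$ can be cleanly separated, so merging them into $Z_1 + Z_2$ is lossless.

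Finally, the multiplicative parts \ref{e:propertyfilteredgamma2gamma2}--\ref{e:propertytensorproduct} rely on the tensor-product construction. Given an optimal factorization $(u_{xj}, v_{yj})$ for $\gtwo{A}{Z}$ and an optimal $\gamma_2$-factorization $B_{x,y} = \braket{p_x}{q_y}$, the vectors $\ket{u_{xj}} \otimes \ket{p_x}$ and $\ket{v_{yj}} \otimes \ket{q_y}$ factor $A \circ B$ through $Z$ with norm at most $\gtwo{A}{Z}\,\gamma_2(B)$, giving \ref{e:propertyfilteredgamma2gamma2} and, by the same construction, \ref{e:property6}. The composition bound \ref{e:propertycomposition} is the delicate one: take an outer factorization realizing $\gtwo{A}{Y}$ and, for each $j$, an inner factorization realizing $\gtwo{Y_j}{Z}$; tensoring outer vectors with the corresponding inner vectors produces a valid factorization of $A$ filtered through $Z$, but one must carefully track that the combined column-sums are bounded by $\sum_j \norm{u_{xj}}^2 \cdot \max_j \gtwo{Y_j}{Z}$, which forces the $\max_j$ (rather than a sum) on the right-hand side. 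The direct-sum and tensor-product identities \ref{e:propertytensorproduct} follow from block-diagonal and full-tensor factorizations for the upper bound, with matching lower bounds obtained by plugging block-diagonal or tensor-product test matrices into the dual \eqnref{e:filteredgamma2dual}.

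The main obstacle will be the composition inequality \ref{e:propertycomposition}, where one must verify that separately optimizing each inner factorization is compatible with the outer one without introducing a spurious sum; the key technical point is that the inner factorizations enter multiplicatively but the $j$-index is summed on the outside, so the budget $\max_j \gtwo{Y_j}{Z}$ suffices uniformly. A secondary subtlety is establishing the equality (rather than inequality) cases in \ref{e:propertyunion}, \ref{e:propertyconvexhull}, and \ref{e:propertytensorproduct}, each of which requires the dual formulation: I would construct a witness $M$ for the dual of the right-hand side and lift it to a witness for the left-hand side with the same value, appealing to strong duality to conclude equality.
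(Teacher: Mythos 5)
Your proposal matches the paper's proof in essentially every respect: primal factorization manipulations (concatenation, rescaling, unpacking extra filter slots, tensoring, and the direct-sum-of-tensor-products construction with the $\max_j$ budget for the composition property) give the upper bounds, while the dual formulation supplies the lower bounds and the equality direction of the tensor-product property, exactly as in the paper (your primal "unpacking" for the union and convex-hull items is an equally valid micro-variation of the paper's dual-redundancy remark). The only nit is your dual witness for $\gtwo{A}{\{A\}} \ge 1$: normalize $\ketbra{a}{b}$ by $\abs{A_{a,b}}$ rather than by $\norm{A}$ so the constraint is saturated and the objective equals $1$; with that trivial fix everything goes through.
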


\begin{proof}
By items~\eqnref{e:propertyextremes}, \eqnref{e:propertyscalable} and \eqnref{e:propertytriangle}, $\gtwo{\cdot}{Z}$ is a norm on arguments restricted to the support of the $Z_j$.  The proofs of the first three properties follow easily from the definition of filtered $\gamma_2$ norm, Eq.~\eqnref{e:filteredgamma2def}.  Therefore we begin by showing the triangle inequality, property~\eqnref{e:propertytriangle}.  
\begin{enumerate}
\setcounter{enumi}{3}
\item Given optimal vector solutions to Eq.~\eqnref{e:filteredgamma2def} for $\gtwo{A}{Z}$ and for $\gtwo{B}{Z}$, simply concatenate corresponding vectors to obtain a solution for $\gtwo{A+B}{Z}$, with objective value at most $\gtwo{A}{Z} + \gtwo{B}{Z}$.  
\item Invariance of $\gtwo{A}{Z}$ under copying rows  follows by copying the associated solution vectors.  
\item Any solution to Eq.~\eqnref{e:filteredgamma2dual} for $\gtwo{A}{Y \cup Z}$ also works for $\gtwo{A}{Z}$; hence $\gtwo{A}{Z} \geq \gtwo{A}{Y \cup Z}$.  However, if $Y_i$ is a submatrix restriction of $Z_j$ then the constraint $\norm{Y_i \circ M} \leq 1$ is redundant to $\norm{Z_j \circ M} \leq 1$; hence adding $Y_i$ to~$Z$ does not affect $\gtwo{A}{Z}$.  
\item If $\max_j \norm{Z_j \circ M} \leq 1$, then $\bignorm{\sum_j p_j Z_j \circ M} \leq 1$; again, the new constraint is redundant.  
\item Assuming without loss of generality that in a solution to $\gtwo{A}{Z}$ the vectors $\ket{u_{xj}}$ (respectively, $\ket{v_{yj}}$) are nonzero only on rows (columns) where $Z_j$ has nonzero entries, concatenating the vectors for $Z_1$ and for $Z_2$ gives a solution to $\gtwo{A}{\{Z_1 + Z_2, Z_3, \ldots, Z_n\}}$.  Thus $\gtwo{A}{\{Z_1 + Z_2, Z_3, \ldots, Z_n\}} \leq \gtwo{A}{Z}$.  For the other direction, divide the vectors for $Z_1 + Z_2$ according to whether they correspond to a nontrivial row or column of $Z_1$, or of $Z_2$.  
\item Begin with an optimal solution $\{\ket{u_x}, \ket{v_y}\}$ to Eq.~\eqnref{e:gamma2def} for $\gamma_2(B)$, and an optimal solution $\{\ket{u_{xj}}, \ket{v_{yj}}\}$ to Eq.~\eqnref{e:filteredgamma2def} for $\gtwo{A}{Z}$.  The tensor products $\{ \ket{u_{xj}} \otimes \ket{u_x}, \ket{v_{yj}} \otimes \ket{v_y}\}$ give a solution for $\gtwo{A \circ B}{Z}$, with objective value at most $\gtwo{A}{Z} \gamma_2(B)$.  

\item The second inequality in property~\eqnref{e:property6} works in the same way as~\eqnref{e:propertyfilteredgamma2gamma2}; the tensor product of vector solutions for $\gamma_2(B)$ and 
$\gtwo{A}{\{Z_j \circ B\}}$ is a solution for $\gtwo{A}{Z}$.  The first inequality follows since any vector solution for $\gtwo{A}{Z}$ also works for $\gtwo{A \circ B}{\{Z_j \circ B\}}$.  

\item Let $\{\ket{u_{xj}}, \ket{v_{yj}}\}$ be an optimal solution to Eq.~\eqnref{e:filteredgamma2def} for $\gtwo{A}{Y}$ and for each $j$ let $\{\ket{u_{xi}^j}, \ket{v_{yi}^j}\}$ be an optimal solution for $\gtwo{Y_j}{Z}$.  These vectors satisfy 
\begin{align*}
\gtwo{A}{Y} &\geq \max\Big\{ \sum_j \norm{\ket{u_{xj}}}{}^2, \sum_j \norm{\ket{v_{yj}}}{}^2 \Big\} &
\gtwo{Y_j}{Z} &\geq \max\Big\{ \sum_i \norm{\ket{u_{xi}^j}}{}^2, \sum_i \norm{\ket{v_{yi}^j}}{}^2 \Big\} \\
A_{x,y} &= \sum_j (Y_j)_{x,y} \braket{u_{xj}}{v_{yj}} &
(Y_j)_{x,y} &= \sum_i (Z_i)_{x,y} \braket{u_{xi}^j}{v_{yi}^j}
 \enspace .
\end{align*}
Combining the last two equations gives $A_{x,y} = \sum_i (Z_i)_{x,y} \sum_j \braket{u_{xj}}{v_{yj}} \braket{u_{xi}^j}{v_{yi}^j}$.  Thus the vectors $\oplus_j ( \ket{u_{xj}} \otimes \ket{u_{xi}^j} )$ and $\oplus_j ( \ket{v_{yj}} \otimes \ket{v_{yi}^j} )$ give a solution for $\gtwo{A}{Z}$, with objective value at most $\max\{ \max_x \sum_{i,j} \norm{\ket{u_{xj}}}{}^2 \norm{\ket{u_{xi}^j}}{}^2, \max_y \sum_{i,j} \norm{\ket{v_{yj}}}{}^2 \norm{\ket{v_{yi}^j}}{}^2 \} \leq \gtwo{A}{Y} \max_j \gtwo{Y_j}{Z}$.  

\item A union of the vectors for $\gtwo{A}{Y}$ and $\gtwo{B}{Z}$ gives a vector solution for $\gtwo{A\oplus B}{Y_j \oplus Z_j}$.  

\item The inequality $\gtwo{A \otimes B}{Y \otimes Z} \leq \gtwo{A}{Y} \gtwo{B}{Z}$ is straightforward; if $\{\ket{u_{xi}}, \ket{v_{yi}}\}$ form an optimal vector solution for $\gtwo{A}{Y}$ and $\{\ket{\mu_{\alpha j}}, \ket{\nu_{\beta j}} \}$ form an optimal vector solution for $\gtwo{B}{Z}$, then the vectors $\ket{u_{(x,\alpha) (i,j)}} = \ket{u_{xi}} \otimes \ket{\mu_{\alpha j}}$ and $\ket{v_{(y,\beta) (i,j)}} = \ket{v_{yi}} \otimes \ket{\nu_{\beta j}}$ satisfy 
\begin{align*}
\sum_{i,j} (Y_i \otimes Z_j)_{(x, \alpha), (y, \beta)} \braket{u_{(x,\alpha) (i,j)}}{v_{(y,\beta) (i,j)}}
&= \sum_i (Y_i)_{x,y} \braket{u_{xi}}{v_{yi}} \sum_j (Z_j)_{\alpha, \beta} \braket{\mu_{\alpha j}}{\nu_{\beta j}} = A_{x,y} B_{\alpha, \beta} 
\end{align*}
and therefore give a solution for $\gtwo{A \otimes B}{Y \otimes Z}$, with objective value at most $\gtwo{A}{Y} \gtwo{B}{Z}$.  

For the other direction of the tensor-product inequality, let $M$ be an optimal solution to the dual SDP Eq.~\eqnref{e:filteredgamma2dual} for $\gtwo{A}{Y}$ and let $N$ be an optimal solution to the dual SDP for $\gtwo{B}{Z}$.  We claim that $M \otimes N$ is a solution to the dual SDP for $\gtwo{A \otimes B}{Y \otimes Z}$.  Indeed, for all $i, j$, $\norm{(Y_i \otimes Z_j) \circ (M \otimes N)} = \norm{(Y_i \circ M) \otimes (Z_j \circ N)} = \norm{Y_i \circ M} \norm{Z_j \circ N} \leq 1$.  The objective value is $\norm{(A \otimes B) \circ (M \otimes N)} = \gtwo{A}{Y} \gtwo{B}{Z}$.  Thus $\gtwo{A}{Y} \gtwo{B}{Z} \leq \gtwo{A \otimes B}{Y \otimes Z}$.  \qedhere
\end{enumerate}
\end{proof}

For completeness, we also present the dual norm $\gtwostar{\,\cdot\,}{Z}$.  Let $\hat A = \big[\begin{smallmatrix} 0 & A \\ A^\dagger & 0 \end{smallmatrix}\big]$.  Then 
\begin{equation}\begin{split} \label{e:gtwostar}
\gtwostar{A}{Z}
&= \max_{B : \gtwo{B}{Z} = 1} \langle A, B \rangle \\
&= \max_{\{Y_j \succeq 0\}} \Big\{ \frac12 \sum_j \langle Y_j, \hat Z_j \circ \hat A \rangle : \sum_j Y_j \circ \identity = \identity \Big\} \\ 
&= \min_\Omega \Big\{ \tfrac12 \Tr \, \Omega : \text{$\Omega \circ \identity = \Omega$ and $\forall j$, $\Omega - \hat A \circ \hat Z_j \succeq 0$} \Big\} \enspace .
\end{split}\end{equation}
When $A$ and the $Z_j$ are Hermitian, then $\gtwostar{A}{Z} = \min_\Omega \{ \Tr \, \Omega : \text{$\Omega \circ \identity = \Omega$ and $\forall j$, $\Omega \pm A \circ Z_j \succeq 0$} \}$, a slightly simpler form that we will use below in \appref{a:functioncomposition}.  The dual norm $\gamma_2^*$ satisfies several similar properties to $\gamma_2$, such as $\gtwostar{A}{Z} \leq \gtwostar{A}{Y \cup Z}$, $\gtwostar{A}{Z} = \gtwostar{A}{Z \cup \{\sum_j p_j Z_j\}}$ if $\sum_j \abs{p_j} = 1$, and $\gtwostar{A \otimes B}{Y \otimes Z} = \gtwostar{A}{Y} \gtwostar{B}{Z}$.  It also satisfies $\gtwostar{A \circ B}{Z} = \gtwostar{A}{Z \circ B} \le \gtwostar{A}{Z} \gamma_2(B)$.  We leave the proofs of these claims to the reader.

\section{Application to continuous-time query complexity} \label{a:continuousquery}

The first step of the proof of Cleve et al.~\cite{CleveGottesmanMoscaSommaYongeMallo08discretize} is to show that the continuous-time model is equivalent to the fractional quantum query model, up to constant factors.  For completeness, we now show that $\gtwoDelta{\sigma - \rho}$ remains a lower bound on the fractional query complexity, up to a constant.  Together with our upper bound, this gives \thmref{t:continuousdiscrete}.  Yonge-Mallo \cite{YM11} recently published a proof from 2007 which directly shows the general adversary bound is a lower bound on the continuous-time query model, and this was also independently observed by Landahl (unpublished).  

Let us first describe the fractional query model.  For simplicity, we restrict to the case of boolean input.  Here the $\lambda$-fractional query operator $O_x(\lambda)$ behaves as $O_x(\lambda) \ket{i}\ket{z} = e^{i \lambda \pi x_i} \ket{i}\ket{z}$.  Thus the usual query operator is obtained with $\lambda = 1$.  The query cost is $\lambda$ times the number of applications of~$O_x$.  

As before the key step is to bound how much a single query can change the distance.  

\begin{lemma}
Suppose that $\sigma$ can be reached from $\rho$ with one $\lambda$-fractional query.  Then $\gtwoDelta{\rho - \sigma} \le \lambda \pi \sqrt 2$.  
\end{lemma}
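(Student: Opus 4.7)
The plan is to adapt the proof of Lemma 5.1 to the phase-oracle structure of the fractional query. The first step is a resolution-of-identity expansion: since $\sum_j \Gamma_j = \identity$ and the fractional oracle satisfies $O_x(\lambda)\Gamma_j = e^{i\lambda\pi x_j}\Gamma_j$, the hypothesis $\sigma_{x,y} = \bra{\rho_x}O_x(\lambda)^\dagger O_y(\lambda)\ket{\rho_y}$ gives
\[
(\rho - \sigma)_{x,y} = \sum_j \bigl(1 - e^{i\lambda\pi(y_j - x_j)}\bigr)\bra{\rho_x}\Gamma_j\ket{\rho_y}
= \sum_{j : x_j \ne y_j}\bigl(1 - e^{i\lambda\pi(y_j - x_j)}\bigr)\bra{\rho_x}\Gamma_j\ket{\rho_y},
\]
with each non-trivial summand carrying a scalar factor of magnitude $|1 - e^{\pm i\lambda\pi}| = 2|\sin(\lambda\pi/2)| \le \lambda\pi$. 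This is the analog of the identity $(\rho-\sigma)_{x,y} = \sum_{j:x_j\ne y_j}\bra{\rho_x}(\identity - O_x^\dagger O_y)\Gamma_j\ket{\rho_y}$ used in Lemma 5.1, with the factor of $2$ from the discrete oracle replaced by a factor of size $\lambda\pi$.

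The second step is to realize each summand as $\braket{u_{xj}}{v_{yj}}$ with small $\sum_j \|\cdot\|^2$. Since the scalar $1 - e^{i\lambda\pi(y_j - x_j)}$ depends only on the two bits $x_j, y_j$, this reduces to $\gamma_2$-factoring the $2\times 2$ matrix $M = \{1 - e^{i\lambda\pi(y-x)}\}_{x,y \in \{0,1\}}$, which is anti-diagonal with entries of magnitude $2\sin(\lambda\pi/2)$. I would choose an explicit symmetric rank-$2$ factorization $M_{x,y} = \sum_k \bar a_{x,k} b_{y,k}$ with $\max_x\sum_k|a_{x,k}|^2$ and $\max_y\sum_k|b_{y,k}|^2$ both bounded by $\lambda\pi\sqrt 2$, writing $1 - e^{i\lambda\pi} = -2i\sin(\lambda\pi/2)e^{i\lambda\pi/2}$ to make the magnitudes visible and distribute $\sqrt{2\sin(\lambda\pi/2)} \le \sqrt{\lambda\pi}$ symmetrically to each side. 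Then set
\[
\ket{u_{xj}} = \sum_k a_{x_j,k}\,\Gamma_j\ket{\rho_x}\otimes\ket k,
\qquad
\ket{v_{yj}} = \sum_k b_{y_j,k}\,\Gamma_j\ket{\rho_y}\otimes\ket k.
\]

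The third step is bookkeeping. By construction $\braket{u_{xj}}{v_{yj}} = M_{x_j,y_j}\bra{\rho_x}\Gamma_j\ket{\rho_y}$, so $\sum_j (\Delta_j)_{x,y}\braket{u_{xj}}{v_{yj}} = (\rho-\sigma)_{x,y}$, verifying feasibility in the definition of $\gtwoDelta{\,\cdot\,}$. For the norm sum, since $\sum_k|a_{x_j,k}|^2$ depends on $x$ only through the single bit $x_j$, using $\sum_j\|\Gamma_j\ket{\rho_x}\|^2 = \|\ket{\rho_x}\|^2 = 1$ yields
\[
\sum_j \|u_{xj}\|^2 = \sum_j \Bigl(\sum_k|a_{x_j,k}|^2\Bigr)\|\Gamma_j\ket{\rho_x}\|^2 \le \lambda\pi\sqrt 2,
\]
and symmetrically for $\sum_j\|v_{yj}\|^2$, giving $\gtwoDelta{\rho - \sigma} \le \lambda\pi\sqrt 2$.

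The only real obstacle is getting the phases in the $2\times 2$ factorization of $M$ to line up correctly while keeping both sides balanced; the $\sqrt 2$ in the statement is exactly the slack that makes the symmetric split available, and the rest of the argument is a direct fractional-query analog of Lemma 5.1.
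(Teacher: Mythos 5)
Your proposal is correct, and after the shared first step it takes a genuinely different route from the paper. The paper starts from the same identity $(\rho-\sigma)_{x,y}=\sum_{j:x_j\ne y_j}\braket{\rho_x^j}{\rho_y^j}\big(1-e^{i\lambda\pi(y_j-x_j)}\big)$, but then builds a \emph{one-sided, positive-semidefinite} witness: it splits $1-e^{i\lambda\pi(y_j-x_j)}=(1-\cos\lambda\pi)+i(x_j-y_j)\sin\lambda\pi$, introduces the PSD matrices $E_j$ with entries $\braket{e_{x_j}}{e_{y_j}}$ for $e_b=b+i(1-b)$, writes $\rho-\sigma=\sum_j P_j\circ\Delta_j$ with $P_j=(1-\cos\lambda\pi)M_j+\sin(\lambda\pi)\,M_j\circ E_j\succeq 0$, and bounds the cost by the diagonal sum $p(\lambda)=(1-\cos\lambda\pi)+\sin\lambda\pi\le\lambda\pi\sqrt2$ via a derivative estimate. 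You instead exploit that the filtered $\gamma_2$ program allows different left and right vectors, and factor the $2\times2$ phase-difference matrix directly, tensoring with $\Gamma_j\ket{\rho_x}$; this works, and since only the off-diagonal entries matter (they are the ones multiplied by $(\Delta_j)_{x,y}$), the anti-diagonal matrix with entries of modulus $2\sin(\lambda\pi/2)$ admits a factorization with left and right vectors supported on disjoint coordinates, each of squared norm exactly $2\sin(\lambda\pi/2)$. So your bookkeeping goes through (using $\sum_j\Gamma_j=\identity$, which the paper also assumes), and in fact yields the slightly stronger bound $\gtwoDelta{\rho-\sigma}\le 2\sin(\lambda\pi/2)\le\lambda\pi$. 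The one inaccuracy is your closing remark: the $\sqrt2$ is not needed to make a balanced split available in your construction; it is an artifact of the paper's choice of a PSD (Gram-type) witness, whose cost $(1-\cos\lambda\pi)+\sin\lambda\pi$ genuinely can exceed $\lambda\pi$ (e.g.\ at $\lambda=1/2$). What the paper's route buys is a witness in the same positive-semidefinite form used in its other adversary-style arguments; what yours buys is a shorter computation and a better constant.
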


\begin{proof}
Let $\rho_x^i$ be the projection of $\rho_x$ onto the part of the query register holding $i$.  Then 
\begin{align*}
(\rho - \sigma)_{x,y}
&= \sum_{j=1}^n \big( \braket{\rho_x^j}{\rho_y^j} - \bra{\rho_x^j} e^{-i \lambda \pi x_j} e^{i \lambda \pi y_j} \ket{\rho_y^j} \big) \\
&= \sum_{j: x_j \ne y_j} \braket{\rho_x^j}{\rho_y^j}(1- e^{i \lambda \pi (y_j-x_j)}) \\
&= \sum_{j: x_j \ne y_j} \braket{\rho_x^j}{\rho_y^j} 
\big((1 - \cos(\lambda \pi)) + i (x_j - y_j) \sin(\lambda \pi) \big)
 \enspace .
\end{align*}
Now we define positive semi-definite matrices $\{P_j\}_{j \in [n]}$ satisfying $\rho-\sigma = \sum_j P_j \circ \Delta_j$.  For this let us define a couple of auxiliary matrices.  Let $M_j(x,y) = \braket{\rho_x^j}{\rho_y^j}$ and let $E_j(x,y) = \braket{e_{x_j}}{e_{y_j}}$, where $e_b = b + i (1-b)$ for $b \in \B$.  From this definition it is clear that $E_j$ is positive semi-definite, and note that 
\[
E_j(x,y) = 
\begin{cases}
i (x_j -y_j) & \text{if } x_j \ne y_j \\
1 & \text{otherwise} \enspace .
\end{cases} 
\]
Finally, we can define $P_j = (1-\cos(\lambda \pi))M_j + \sin(\lambda \pi) M_j \circ E_j$.  Then $P_j$ is positive semi-definite, and satisfies $\rho-\sigma = \sum_j P_j \circ \Delta_j$.  As $\sum_{j \in [n]} M_j(x,x)=1$ for all $x$ we can upper bound the cost $\max_x \sum_{j \in [n]} P_j(x,x)$ by $p(\lambda) = (1-\cos(\lambda \pi)) + \sin(\lambda \pi)$.  Note that $p(0) = 0$ and the maximum value of the derivative of $p(\lambda)$ is $\pi\sqrt{2}$.  Thus for $\lambda \ge 0$ we have $p(\lambda) \le \lambda \pi \sqrt 2$.  
\end{proof}

\section{Function composition} \label{a:functioncomposition}

In this section we prove the composition lemmas, Lemmas~\ref{t:ADVpmwsizecomposition} and~\ref{t:compositionlowerbound}.  

We begin with some notation.  Let $g: {\mathcal C} \rightarrow D$ where ${\mathcal C} \subseteq C^m$ and $f : D^n \rightarrow E$ for finite sets $C$, $D$ and~$E$.  Let $G = \{\delta_{g(x), g(y)}\}_{x,y}$, and $F = \{\delta_{f(x), f(y)}\}_{x,y}$.  For a string $x \in {\mathcal C}^n$ we write $x=(x^1, \ldots, x^n)$ where each $x^i \in {\mathcal C}$, and we let $\tilde x = g(x^1) \cdots g(x^n) \in D^n$.  For a ${\abs D}^n$-by-${\abs D}^n$ matrix~$A$, define a $\abs{\mathcal C}^n$-by-$\abs{\mathcal C}^n$ matrix $\tilde A$ by $\tilde A_{x,y} = A_{\tilde x, \tilde y}$.  With this notation, $\tilde \Delta_p = J^{\otimes (p-1)} \otimes (J-G) \otimes J^{\otimes (n-p)}$, and the filtering matrices for the composed function $f \circ g^n$ are $\Delta_{(p,q)} = J^{\otimes (p-1)} \otimes \Delta_q \otimes J^{\otimes (n-p)}$.  To shorten expressions like these, we will use the notation $(B)_p \otimes \bigotimes_{i \ne p} A_i = A^{\otimes p-1} \otimes B \otimes A^{\otimes n-p}$.  

\begin{proof}[Proof of \lemref{t:ADVpmwsizecomposition}]
The lemma is a consequence of the composition property \eqnref{e:propertycomposition} from \lemref{t:gamma2_properties}, together with several other properties of the filtered $\gamma_2$ norm.  Let $\digamma = J-F$, so $\ADVpm(f \circ g^n) = \gtwo{\tilde \digamma}{\{ \Delta_{(p,q)} \circ \tilde \digamma \}}$.  We have 
\begin{align*}
\gtwo{\tilde \digamma}{\{ \Delta_{(p,q)} \circ \tilde \digamma \}}
&\leq \gtwo{\tilde \digamma}{\{\tilde \Delta_p \circ \tilde \digamma\}} \max_\rho \gtwo{\tilde \Delta_\rho \circ \tilde \digamma}{\{\Delta_{(p,q)} \circ \tilde \digamma\}} & \text{property \eqnref{e:propertycomposition}} \\
&\leq \gtwo{\digamma}{\{ \Delta_p \circ \digamma \}} \max_\rho \gtwo{\tilde \Delta_\rho}{\{\Delta_{(\rho,q)} : q \in [m]\}} & \text{(\ref{e:propertyduplication}, \ref{e:property6}, \ref{e:propertyunion})} \\
&= \ADVpm(f) \gtwo{J-G}{\Delta} \enspace . & \text{(\ref{e:propertyspecialcases}, \ref{e:propertytensorproduct})}
\end{align*}
The last step uses $\gtwo{(J-G)_\rho \otimes \bigotimes_{i \neq \rho} J_i}{\{ (\Delta_q)_\rho \otimes \bigotimes_{i \neq \rho} J_i\}_q} = \gtwo{J-G}{\Delta} \gtwo{J}{J}^{n-1}$.  
\end{proof}

\begin{proof}[Proof of \lemref{t:compositionlowerbound}]
For the lower bound, we will use the dual formulation of the adversary bound.  Either by writing Eq.~\eqnref{e:ADVpmmax} as an SDP and taking the dual or by noting that $\ADVpm(g) = \max_W \{\langle J-G, W \rangle : \gtwostar{W}{\Delta \circ (J-G)} \le 1\}$ and using Eq.~\eqnref{e:gtwostar}, we find 
\begin{equation} \label{e:neg_adv_dual}
\begin{aligned}
\ADVpm(g)= & \; \underset{\Omega, W}{\text{maximize}}
& & \langle J , W \rangle \\
& \text{subject to}
& & \Omega \circ \identity = \Omega \\
& & & \Tr (\Omega) = 1 \\
& & & W \circ G = 0 \\
& & & \Omega \pm W \circ \Delta_j \succeq 0 
 \enspace .
\end{aligned}
\end{equation}

We first note some basic properties of an optimal dual solution $\Omega, W$.  

\begin{claim} \label{claim:basic_properties}
Let $g: {\mathcal C} \rightarrow D$, where ${\mathcal C} \subseteq C^m$.  Then there is an optimal solution $\Omega, W$ to Eq.~\eqnref{e:neg_adv_dual} that satisfies $\ADVpm(g) \Omega \pm W \succeq 0$.  If $D = \B$ we may also assume $\sum_{x: g(x)=1} \Omega_{x,x} = \sum_{x:g(x)=0} \Omega_{x,x} = \frac{1}{2}$.  
\end{claim}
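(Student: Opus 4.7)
The plan is to construct the desired optimal dual pair $(\Omega, W)$ directly from a primal optimum of the adversary bound, that is, a symmetric matrix $\Gamma^*$ with $\norm{\Gamma^*} = \ADVpm(g)$, $\Gamma^* \circ G = 0$, and $\norm{\Gamma^* \circ \Delta_j} \le 1$ for all $j$. Such a $\Gamma^*$ exists by compactness: the constraint $\Gamma \circ G = 0$ already forces the diagonal to vanish, and for each $x \neq y$ there is some $j$ with $(\Delta_j)_{x,y} = 1$, so all off-diagonal entries are bounded by $\max_j \norm{\Gamma \circ \Delta_j} \le 1$. By replacing $\Gamma^*$ with $-\Gamma^*$ if necessary, I can arrange that $\sigma := \ADVpm(g)$ is an actual eigenvalue; let $v$ be a unit eigenvector with $\Gamma^* v = \sigma v$.

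With $D_v := \operatorname{diag}(v)$, I set $\Omega := D_v^2$ and $W := D_v \, \Gamma^* \, D_v$. All constraints are then straightforward to verify: $\Omega$ is diagonal with $\Tr \Omega = \norm{v}^2 = 1$; $W \circ G = D_v (\Gamma^* \circ G) D_v = 0$; and conjugation by the diagonal matrix $D_v$ preserves positive semidefiniteness, so for every $j$,
\begin{equation*}
\Omega \pm W \circ \Delta_j \;=\; D_v \bigl(\identity \pm \Gamma^* \circ \Delta_j\bigr) D_v \;\succeq\; 0 \enspace .
\end{equation*}
The dual objective matches $\ADVpm(g)$: $\langle J, W \rangle = v^T \Gamma^* v = \sigma$. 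Optimality in hand, the sought-after inequality is then immediate, since $\norm{\Gamma^*} = \sigma$ gives $\sigma \identity \pm \Gamma^* \succeq 0$, so
\begin{equation*}
\ADVpm(g) \, \Omega \pm W \;=\; D_v \bigl(\sigma \identity \pm \Gamma^*\bigr) D_v \;\succeq\; 0 \enspace .
\end{equation*}

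For the boolean case $D = \B$, the constraint $\Gamma^* \circ G = 0$ forces a block-off-diagonal form $\Gamma^* = \left(\begin{smallmatrix} 0 & B \\ B^T & 0 \end{smallmatrix}\right)$, with rows and columns indexed by $g^{-1}(0)$ and $g^{-1}(1)$; in particular $\norm{\Gamma^*} = \norm{B}$. Choosing $u_0, u_1$ to be a corresponding pair of unit top singular vectors of $B$ (so that $B u_1 = \sigma u_0$ and $B^T u_0 = \sigma u_1$), the vector $v := \tfrac{1}{\sqrt 2}(u_0, u_1)$ is a unit eigenvector of $\Gamma^*$ with eigenvalue $\sigma$. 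Plugging this $v$ into the above construction yields $\sum_{x : g(x) = 0} \Omega_{x,x} = \sum_{x : g(x) = 0} v_x^2 = \tfrac{1}{2}\norm{u_0}^2 = \tfrac{1}{2}$, and symmetrically $\tfrac{1}{2}$ for $g(x) = 1$. I do not foresee any substantial obstacle; the only step worth flagging is the observation that sign-mixed (or complex) entries of $v$ cause no trouble, since the congruence $M \mapsto D_v M D_v$ preserves positive semidefiniteness regardless of the signs of the entries of $v$.
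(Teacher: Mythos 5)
Your proof is correct, but it takes a genuinely different route from the paper's. The paper works with an \emph{arbitrary} optimal solution $(\Omega, W)$ of Eq.~\eqnref{e:neg_adv_dual} and massages it: for the first property it argues by contradiction that if $\ADVpm(g)\Omega - W \nsucceq 0$ there is a PSD witness $\phi$ with $\langle \phi, W\rangle > \ADVpm(g)\langle \phi, \Omega\rangle$, and then $(\phi\circ\Omega,\ \phi\circ W)$ is again feasible (Schur products with a PSD matrix preserve the constraints) with strictly larger objective; for the boolean balance it rescales the two diagonal blocks of $\Omega$ by $c$ and $1/c$ and observes that any imbalance would permit a solution of smaller trace, hence (after renormalizing) a larger objective. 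You instead \emph{construct} an explicit optimal pair from a primal maximizer $\Gamma^*$ of Eq.~\eqnref{e:ADVpmmax} and its top eigenvector $v$ (top singular vectors of the off-diagonal block in the boolean case), setting $\Omega = D_v^2$ and $W = D_v\Gamma^* D_v$; feasibility, the value $\langle J, W\rangle = \sigma$, the inequality $\ADVpm(g)\Omega \pm W = D_v(\sigma\identity \pm \Gamma^*)D_v \succeq 0$, and the half--half balance then all follow by direct computation. Your approach buys an explicit witness and, in passing, re-derives the direction that the value of Eq.~\eqnref{e:neg_adv_dual} is at least $\ADVpm(g)$; its costs are that you need attainment of the primal optimum (your compactness argument handles this) and you must invoke the equality asserted in Eq.~\eqnref{e:neg_adv_dual} to certify that a feasible pair of value $\ADVpm(g)$ is in fact optimal, whereas the paper's exchange argument needs nothing beyond the optimal dual solution itself. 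Two small remarks: since $\Gamma^*$ is real symmetric you may simply take $v$ real, so the aside about complex entries is unnecessary (and for genuinely complex $v$ the right congruence would be $M \mapsto D_v M D_v^\dagger$); and, exactly as in the paper's proof, the boolean balance statement implicitly assumes $g$ attains both output values.
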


\begin{proof}
Let $\Omega, W$ be an optimal solution to Eq.~\eqnref{e:neg_adv_dual} and let $d_g = \ADVpm(g) = \langle W, J \rangle$.  Note that $d_g \Omega + W \succeq 0$ if and only if $d_g \Omega - W \succeq 0$ since $\Omega$ is diagonal and $W = W \circ (J-G)$ is bipartite.  Suppose that $d_g \Omega - W \nsucceq 0$.  Then there exists $\phi \succeq 0$, such that $\langle \phi, W \rangle > d_g \langle \phi, \Omega \rangle$.  By normalizing~$\phi$, we may assume that $\langle \phi, \Omega \rangle = 1$.  This shows $\phi \circ \Omega, \phi \circ W$ is a feasible solution for $g$ with objective value greater than $d_g$, a contradiction.  

Now for the second part.   We may reorder the rows and columns of $\Omega, W$ so that all elements $x$ with $g(x) = 0$ come first, then all elements $y$ with $g(y)=1$.  Then the matrices $\Omega \pm W \circ \Delta_i$ have the form 
\begin{equation*}
\begin{bmatrix}
\Omega_0 & 0 \\
0 & \Omega_1
\end{bmatrix}
\pm 
\begin{bmatrix}
0 & X \\
X^\dagger & 0
\end{bmatrix} \circ \Delta_i
 \enspace ,
\end{equation*}
where $W = \big[\begin{smallmatrix}0 & X \\ X^\dagger & 0 \end{smallmatrix}\big]$.  Thus for any $c > 0$, 
\[
\begin{bmatrix}
c\Omega_0 & 0 \\
0 & \tfrac{1}{c}\Omega_1
\end{bmatrix}
\pm 
\begin{bmatrix}
0 & X \\
X^\dagger & 0
\end{bmatrix} \circ \Delta_i
\succeq 0
 \enspace .
\]
If we did not originally have $\Tr(\Omega_0) = \Tr(\Omega_1)$ then choosing $c = \sqrt{\tfrac{\Tr(\Omega_1)}{\Tr(\Omega_0)}}$ to balance them will result in a solution with smaller trace, a contradiction to the optimality of $\Omega, W$.  
\end{proof}
Notice that because of the second item we have that 
\begin{equation}\label{bool_prop}
\sum_{\substack{x,y \\ g(x)=a, g(y)=b}} \ADVpm(g) \Omega_{x,y} +W_{x,y} = \ADVpm(g)/2
\end{equation}
for any $a,b \in \B$.  This is the main property of boolean functions we use.  

Now let $d_f = \ADVpm(f)$, $d_g = \ADVpm(g)$, and let $\Lambda, V$ and $\Omega, W$ be optimal solutions to Eq.~\eqnref{e:neg_adv_dual} for $f$ and $g$, respectively, satisfying the conditions of \claimref{claim:basic_properties} as appropriate.  Our proposed solution to Eq.~\eqnref{e:neg_adv_dual} for the composed function $f \circ g^n$ is the diagonal matrix $d_g^{n-1} \tilde \Lambda \circ \Omega^{\otimes n}$ and weight matrix $\tilde V \circ (d_g \Omega + W)^{\otimes n}$.  Notice that the weight matrix satisfies the constraint $\tilde F \circ \big( \tilde V \circ (d_g \Omega + W)^{\otimes n} \big) = 0$ as $F \circ V = 0$.  

Let us check the objective value.  
\begin{align*}
\big\langle J, (\tilde V \circ (d_g \Omega + W)^{\otimes n} ) \big\rangle
&= \sum_{\substack{a,b \in \B^n \\ f(a) \ne f(b)}} V_{a, b} 
\sum_{\substack{x,y \\ \tilde x=a, \tilde y=b}} \prod_i \big( d_g \Omega_{x^i,y^i} + W_{x^i,y^i} \big) \\
&= \sum_{\substack{a,b \in \B^n \\ f(a) \ne f(b)}} V_{a, b} 
\prod_i \sum_{\substack{x^i,y^i \\ g(x^i)=a_i, g(y^i)=b_i}} \big( d_g \Omega_{x^i,y^i} + W_{x^i,y^i} \big) \\
&= d_f \Big(\frac{d_g}{2}\Big)^n
 \enspace .
\end{align*}
The last line follows by Eq.~\eqnref{bool_prop}.  

It remains to show that $d_g^{n-1} \tilde \Lambda \circ \Omega^{\otimes n} \pm \tilde V \circ (d_g \Omega + W)^{\otimes n} \circ \Delta_{(p,q)} \succeq 0$ for all $(p,q)$.  As $\Tr(d_g^{n-1} \tilde \Lambda \circ \Omega^{\otimes n}) = d_g^{n-1}/2^n$ by \claimref{claim:basic_properties}, this will complete the proof.  

We know that $d_g \Omega + W \succeq 0$, $\Omega + W \circ \Delta_q \succeq 0$.  Also $\tilde \Lambda \pm \tilde V \circ \tilde \Delta_p \succeq 0$ follows from $\Lambda \pm V \circ \Delta_p \succeq 0$ as they are equal up to repetition of some rows and columns.  Thus 
\begin{align*}
0 &\preceq (\tilde \Lambda \pm \tilde V \circ \tilde \Delta_p) \circ \Big((\Omega + W \circ \Delta_q)_p \otimes 
\bigotimes_{i \ne p} (d_g\Omega_i +W_i)\Big) \\
&= d_g^{n-1} \tilde \Lambda \circ \Omega^{\otimes n} 
\pm \tilde V \circ \tilde \Delta_p \circ \Big((\Omega + W \circ \Delta_q)_p \otimes 
\bigotimes_{i \ne p} (d_g \Omega_i +W_i)\Big)
 \enspace .
\end{align*}
This equality follows as $\tilde \Lambda_{x,y} = 0$ unless $\tilde x = \tilde y$, meaning that $g(x^i) = g(y^i)$ for all~$i$.  On the other hand, $W_{x^i,y^i} = 0$ if $g(x^i) = g(y^i)$, which kills all terms involving $\tilde \Lambda$ and $W$.  

Now substitute $\tilde \Delta_p = (J-G)_p \otimes \bigotimes_{i \neq p} J_i$ and simplify $(J-G) \circ (\Omega + W \circ \Delta_q) = (d_g \Omega + W) \circ \Delta_q$ since $(J-G) \circ \Omega = \Delta_q \circ \Omega = G \circ W = 0$.  Since $\Delta_{(p,q)} = (\Delta_q)_p \otimes \bigotimes_{i \neq p} J_i$, this gives $d_g^{n-1} \tilde \Lambda \circ \Omega^{\otimes n} \pm \tilde V \circ (d_g \Omega + W)^{\otimes n} \circ \Delta_{(p,q)} \succeq 0$, as desired.  
\end{proof}

\end{document}